\DeclareMathOperator{\poly}{poly}
\Crefname{theorem}{Theorem}{Theorems}
\Crefname{lemma}{Lemma}{Lemmas}
\Crefname{figure}{Figure}{Figures}
\Crefname{claim}{Claim}{Claims}
\Crefname{observation}{Observation}{Observations}
\newtheorem{theorem}{Theorem}[section]
\newtheorem{lemma}{Lemma}[section]
\newtheorem{claim}{Claim}[section]
\newtheorem{corollary}{Corollary}[section]
\newtheorem{question}{Question}[section]
\newenvironment{proofof}[1]{\proof[Proof of #1]}{\endproof}
\newcommand{\ID}{\mathsf{ID}}
\DeclarePairedDelimiter{\ceil}{\lceil}{\rceil}
\newcommand{\ip}[1]{\left}
\newcommand{\congest}{\ensuremath{\mathsf{CONGEST}}\xspace}
\newcommand{\clique}{\textnormal{\textsf{Congested-Clique}}\xspace}
\newcommand\bigO[1]{\ensuremath{{O}(#1)}}
\newcommand\bigO*[1]{\ensuremath{{O}\left(#1\right)}}
\newcommand\tildeBigO[1]{\ensuremath{{\tilde{{O}}}(#1)}}
\newcommand\tildeBigO*[1]{\ensuremath{{\tilde{{O}}}\left(#1\right)}}
\newcommand\littleO[1]{\ensuremath{{o}(#1)}}
\newcommand\littleO*[1]{\ensuremath{{o}\left(#1\right)}}
\newcommand\tildeLittleO[1]{\ensuremath{{\tilde{{o}}}(#1)}}
\newcommand\tildeLittleO*[1]{\ensuremath{{\tilde{{o}}}\left(#1\right)}}
\newcommand\bigOmega[1]{\ensuremath{{\Omega}(#1)}}
\newcommand\bigOmega*[1]{\ensuremath{{\Omega}\left(#1\right)}}
\newcommand\tildeBigOmega[1]{\ensuremath{{\tilde{{\Omega}}}(#1)}}
\newcommand\tildeBigOmega*[1]{\ensuremath{{\tilde{{\Omega}}}\left(#1\right)}}
\newcommand\littleOmega[1]{\ensuremath{{\omega}(#1)}}
\newcommand\littleOmega*[1]{\ensuremath{{\omega}\left(#1\right)}}
\newcommand\tildeLittleOmega[1]{\ensuremath{{\tilde{{\omega}}}(#1)}}
\newcommand\tildeLittleOmega*[1]{\ensuremath{{\tilde{{\omega}}{\left(#1\right)}}}}
\newcommand\bigTheta[1]{\ensuremath{{\Theta}(#1)}}
\newcommand\bigTheta*[1]{\ensuremath{{\Theta}\left(#1\right)}}
\newcommand\tildeTheta[1]{\ensuremath{{\tilde{{\Theta}}}(#1)}}
\newcommand\tildeTheta*[1]{\ensuremath{{\tilde{{\Theta}}\left(#1\right)}}}
\newcommand*{\whp}{%
    \@ifnextchar{.}%
        {w.h.p}%
        {w.h.p.\@\xspace}%
}
\title{Improved All-Pairs Approximate Shortest \\ Paths in Congested Clique\thanks{A preliminary version~\cite{bui2024improved} of this paper was presented at the 43rd ACM Symposium on Principles of Distributed Computing (PODC 2024), June 17--21, 2024, Nantes, France.}}
\author{\hspace{2cm}  Hong Duc Bui\footnote{National University of Singapore. Email: buihd@u.nus.edu} \and Shashwat Chandra\footnote{National University of Singapore. Email: shashwatchandra@u.nus.edu} \and Yi-Jun Chang\footnote{National University of Singapore. Email: cyijun@nus.edu.sg} \hspace{2cm}  \and Michal Dory\footnote{University of Haifa. Email: mdory@ds.haifa.ac.il} \and Dean Leitersdorf\footnote{National University of Singapore. Email: dean.leitersdorf@gmail.com}}
\date{}
\begin{document}

\maketitle
\begin{abstract}
In this paper, we present a new randomized $O(1)$-approximation algorithm for the All-Pairs Shortest Paths (APSP) problem in weighted undirected graphs that runs in just $O(\log \log \log n)$ rounds in the \clique model.

Before our work, the fastest algorithms achieving an $O(1)$-approximation for APSP in \emph{weighted} undirected graphs required $\poly(\log n)$ rounds, as shown by Censor-Hillel, Dory, Korhonen, and Leitersdorf (PODC 2019 \& Distributed Computing 2021). In the \emph{unweighted} undirected setting, Dory and Parter (PODC 2020 \& Journal of the ACM 2022) obtained $O(1)$-approximation in $\poly(\log \log n)$ rounds.

By terminating our algorithm early, for any given parameter $t \geq 1$, we obtain an $O(t)$-round algorithm that guarantees an $O\left(\log^{1/2^t} n\right)$ approximation in weighted undirected graphs. This tradeoff between round complexity and approximation factor offers flexibility, allowing the algorithm to adapt to different requirements. In particular, for any constant $\varepsilon > 0$, an $O\left(\log^\varepsilon n\right)$-approximation can be obtained in $O(1)$ rounds. Previously, $O(1)$-round algorithms were only known for $O(\log n)$-approximation, as shown by Chechik and Zhang (PODC 2022).

A key ingredient in our algorithm is a lemma that, under certain conditions, allows us to improve an $a$-approximation for APSP to an $O(\sqrt{a})$-approximation in $O(1)$ rounds. To prove this lemma, we develop several new techniques, including an $O(1)$-round algorithm for computing the $k$-nearest nodes, as well as new types of hopsets and skeleton graphs based on the notion of $k$-nearest nodes.
\end{abstract}

\newpage 
{ \normalsize	
\tableofcontents
}
\thispagestyle{empty} %
\newpage

\setcounter{page}{1}
\section{Introduction}\label{sec:intro}

The \emph{All-Pairs Shortest Paths (APSP)} problem is one of the most fundamental and well-studied problems in graph algorithms. It is particularly important in distributed computing due to its close connection to network routing. 

In this paper, we study the APSP problem in the distributed \clique model~\cite{lotker2005minimum}. This model consists of a fully connected communication network of \( n \) nodes, where each node can send \( O(\log n) \)-bit messages to any other node in synchronous rounds. The input graph \( G = (V, E) \) is arbitrary and given in a distributed manner: Each node initially knows the weights of its incident edges in \( G \), and the goal is for every node to compute its distances to all other nodes. 

We emphasize that while \( G \) can be any graph on \( n \) nodes, the communication network itself is a clique. The primary objective in this model is to minimize the number of communication rounds required to solve the problem. The \clique model has attracted significant attention in recent years, partly due to its connections to modern parallel computation frameworks such as the Massively Parallel Computation (MPC) model~\cite{karloff2010MapReduce}.

\subsection{Prior Work}
The problem of computing APSP in the \clique model has been extensively studied in recent years~\cite{censor2019algebraic,le2016further,censor2019sparse,censorhillel2019fast,dory2022exponentially,dory2021constant,chechik2022constant}. The earliest algorithms are based on matrix multiplication and require a polynomial number of rounds~\cite{censor2019algebraic,le2016further,censor2019sparse}. For instance, APSP in directed weighted graphs can be solved in \(\tilde{O}(n^{1/3})\) rounds, while for unweighted undirected graphs, algorithms with round complexity \(O(n^{0.158})\) are known~\cite{censor2019algebraic}. Designing faster algorithms for \emph{exact} APSP appears challenging, as shown in~\cite{dor2000all,DBLP:conf/spaa/KorhonenS18}: Any approximation factor better than two would already imply a fast matrix multiplication algorithm in the \clique model. The same hardness applies to \emph{any} approximation in directed graphs, which motivates the study of the problem in undirected graphs.

Hence, a natural approach for obtaining faster algorithms is to allow approximation algorithms. A recent line of work \cite{censorhillel2019fast,dory2022exponentially,dory2021constant, chechik2022constant} led to faster approximation algorithms for the problem. First, \cite{censorhillel2019fast} exploits \emph{sparse} matrix multiplication algorithms to obtain constant approximation for APSP in \emph{poly-logarithmic} number of rounds. This approach gives $(3+\varepsilon)$-approximation for APSP in weighted undirected graphs and $(2+\varepsilon)$-approximation in unweighted undirected graphs. In both cases, the running time is $O(\log^2{n}/\varepsilon)$ rounds. A subsequent work gives faster $\poly(\log \log n)$-round algorithms for a $(2+\varepsilon)$-approximation of APSP in unweighted undirected graphs \cite{dory2022exponentially}. The approach in \cite{dory2022exponentially} combines sparse matrix multiplication with fast construction of emulators, which are sparse graphs that approximate the distances in the input graph. Finally, recent works give $O(1)$-round algorithms for $O(\log{n})$-approximation for APSP in weighted undirected graphs via a fast construction of multiplicative spanners \cite{dory2021constant, chechik2022constant}. Here the goal is to construct a sparse graph of $O(n)$ edges that approximates the distances in the graph and broadcast it to the whole network. Because of the known tradeoffs between the number of edges and approximation guarantee of spanners, this approach leads to an $\Omega(\log{n})$-approximation.

To conclude, by now there are very fast algorithms that take just $O(1)$ rounds to compute an $O(\log{n})$-approximation of APSP. On the other hand, if our goal is to optimize the approximation factor, we can get an $O(1)$-approximation in $\poly(\log{n})$ rounds in weighted undirected graphs, or in $\poly(\log{\log{n}})$ rounds in unweighted undirected graphs. A natural goal is to obtain both $O(1)$-approximation and $O(1)$ rounds simultaneously.

\begin{question}\label{question1}
	Can we obtain $O(1)$-approximation of APSP in $O(1)$ rounds?
\end{question}

For weighted undirected graphs, the gap is even larger since the fastest \( O(1) \)-approximation algorithms take \( \poly(\log n) \) rounds. This naturally leads to the following question.

\begin{question} \label{question_wAPSP}
	Can we obtain $O(1)$-approximation of APSP on weighted graphs in $o(\log{n})$ rounds?
\end{question}

\subsection{Our Contribution}

In this work, we make progress in answering the above questions. Throughout the paper, we say that an event occurs \emph{with high probability} (w.h.p.) if it happens with probability at least $1 - 1/\poly(n)$, where $n$ denotes the number of nodes in the input graph. All our randomized algorithms are \emph{Monte Carlo} algorithms: They always meet the stated round complexity guarantees but may produce incorrect outputs with small probability.

\begin{theorem}[name=APSP approximation,restate=main]\label{approximate-APSP}
	For any constant $\varepsilon > 0$, a $(7^4 + \varepsilon)$-approximation of APSP in weighted undirected graphs can be computed w.h.p.\ in $O(\log \log \log n)$ rounds in the \clique model.
\end{theorem}

Our result provides the first sub-logarithmic round algorithm for the problem in \emph{weighted} undirected graphs, thereby answering \Cref{question_wAPSP} affirmatively. For \emph{unweighted} undirected graphs, our algorithm exponentially improves the number of rounds required to obtain an $O(1)$-approximation for APSP over the previous state of the art, which required $\poly(\log \log n)$ rounds~\cite{dory2022exponentially}.

Our approach also leads to tradeoffs between the running time and approximation.
More concretely, our algorithm starts with an $O(\log{n})$-approximation for APSP and successively improves it.
If we terminate the algorithm early, then we obtain the following tradeoff between the number of rounds needed and the approximation factor.

\begin{theorem}[name=Round-approximation tradeoff, restate=truncatedalgorithm]\label{tradeoff-APSP}
For any $t \geq 1$, an $O\left(\log^{2^{-t}} n\right)$-approximation of APSP in weighted undirected graphs can be computed w.h.p.\ in $O(t)$ rounds in the \clique model.
\end{theorem}

In particular, for any constant $\varepsilon > 0$, an $O\left(\log^\varepsilon n\right)$-approximation can be achieved in $O(1)$ rounds, making progress on \Cref{question1}.  Prior to this work, the best approximation factor attainable by $O(1)$-round algorithms was $O(\log n)$, first established for unweighted graphs~\cite{dory2021constant}, and later extended to weighted graphs~\cite{chechik2022constant}.

\section{Preliminaries}\label{sec:prelim}

In this section, we present the computational model, along with the basic definitions and tools.

\paragraph{The congested clique model.}
In the \clique model, we have a \emph{fully-connected} communication network of $n$ nodes, where nodes communicate by sending $O(\log{n})$-bit messages to each other node in synchronous rounds. We assume that each node $v$ is initially equipped with an $O(\log n)$-bit distinct identifier $\ID(v)$. By renaming, we may assume that the set of all identifiers is $\{1,2,\ldots, n\}$.

Given an input graph $G$ on $n$ nodes,
initially each node of the \clique knows its own input, i.e., the edges incident to it in $G$ and their weights, and at the end of the algorithm, it should know its output. For example, when computing shortest paths, a node $v$ should know its distances from other nodes.

We write \( \clique[B] \) to denote the variant of the \clique model in which the message size is \( O(B) \)~\cite{drucker2014power}. The standard \clique model corresponds to the special case \( B = \log n \). %

\subsection{Graph Terminology}

Throughout the paper, for the input graph $G = (V, E)$, we write $n = |V|$. %
Unless otherwise stated, we assume that the graph $G$ under consideration is \emph{simple}, \emph{undirected}, and \emph{weighted}. We write $w_{uv}$ to denote the weight of the edge $(u,v)$.
The number of \emph{hops} in a path $P$ refers to the number of edges in $P$. The \emph{length} of $P$ is the sum of the weights of its edges. The \emph{distance} $d_G(u, v)$ from node $u$ to node $v$ in $G$ is defined as the minimum length of any path from $u$ to $v$ in $G$. The subscript $G$ may be omitted when the graph is clear from context. The \emph{weighted diameter} of $G$ is the maximum distance $d_G(u, v)$ over all $u$ and $v$.

As is standard, we assume that edge weights are polynomially bounded \emph{positive} integers. However, as discussed in \Cref{subsec:zero_weight}, all our results on APSP approximation also extend to the case of polynomially bounded \emph{nonnegative} integer weights.%

For any node \( v \in V \) and any integer \( k \geq 0 \), the set  of \emph{\( k \)-nearest} nodes, denoted by \( N_k(v) \), is defined as the set of the \( k\) nodes \( u \) with the smallest values of \( d(u, v) \), breaking ties by node $\ID$s.

The \emph{$h$-hop distance}  between two nodes $u$ and $v$ is the minimum length of any path from $u$ to $v$ with at most $h$ hops. The set of {\( k \)-nearest} nodes \emph{within $h$ hops of $v$}, denoted by \( N_k^h(v) \), is the set of the \( k\) nodes \( u \) with the smallest $h$-hop distances to $v$, again breaking ties by node $\ID$s. The notations $d_G(u,v)$, $N_k(v)$, and $N_k^h(v)$ naturally generalize to \emph{directed} graphs.

\paragraph{Spanners.} A \emph{$k$-spanner} of a graph $G = (V, E)$ is a subgraph $G' = (V, E')$ on the same node set $V$ such that, for every pair of nodes $u, v \in V$, the distance in $G'$ satisfies
\[
d_{G'}(u, v) \leq k \cdot d_G(u, v).
\]
The parameter $k$ is referred to as the \emph{stretch} of the spanner $G'$.

\paragraph{Hopsets.}
A hopset is a set of edges added to a graph that enables shortcutting paths to reduce the number of hops. In this paper, we focus particularly on shortcutting paths to the $k$-nearest nodes and introduce the notion of a \emph{$k$-nearest $\beta$-hopset}. 

A graph $H$ is called a {$k$-nearest $\beta$-hopset} for a graph $G = (V, E)$ if it satisfies the following conditions:
\begin{itemize}
    \item $H$ is defined on the same node set $V$.
    \item For every pair of nodes $u, v \in V$, the distance is preserved: 
    \[
    d_G(u, v) = d_{G \cup H}(u, v).
    \]
    \item For all $u \in V$ and for all $v \in N_k(u)$, there exists a path in $G \cup H$ from $u$ to $v$ with at most $\beta$ hops and length exactly $d_G(u, v)$.
\end{itemize}

\paragraph{APSP.} In the All-Pairs Shortest Paths (APSP) problem, each node initially knows only its incident edges. The goal is for every node $u$ to learn the distance $d_G(u, v)$ to every other node $v$ by the end of the computation. 

In the approximate version, this requirement is relaxed so that each node learns an estimate $\delta(u, v)$ instead of the exact distance. An \emph{\(\alpha\)-approximation} satisfies
\[
d_G(u, v) \leq \delta(u, v) \leq \alpha \cdot d_G(u, v)
\]
for all pairs of nodes $u$ and $v$.

\paragraph{Matrix exponentiation.} The relationship between the APSP problem and the matrix multiplication problem in a tropical semiring is well-known --- specifically, exponentiation.

Let $R = (\mathbb {Z}_{\geq 0} \cup \{\infty \}, \oplus , \odot )$ be the tropical (or min-plus) semiring, with elements being the nonnegative integers, the addition operation $\oplus $ in the ring $R$ is defined by $x \oplus y = \min(x, y)$, and the multiplication $\odot $ is the usual addition of the integers by $x \odot y = x+y$.

For two matrices $A$ and $B$ with entries in $R$, define the distance product $A \star B$ as the matrix product over the tropical semiring, that is,
\[ (A \star B)[i, j] = \min _k (A[i, k]+B[k, j]). \]

Let \( A \) be the weighted adjacency matrix of a graph \( G=(V,E) \), with \( A[v, v] = 0 \) for all \( v \in V \). Then \( A^h \) represents the $h$-hop distances between all pairs of nodes. In particular, if \( h \) is at least the maximum number of hops in any shortest path in \( G \), then \( A^h \) equals the distance matrix---that is, \( A^h[u, v] \) equals the  distance from \( u \) to \( v \).

\subsection{Handling Zero Edge Weights}
\label{subsec:zero_weight}

Throughout the paper, we assume that all edge weights are positive integers.
However, the algorithm can be updated to handle the case where edge weights can be zero, using the following black-box reduction. 

\begin{restatable}{theorem}{thmZero} \label{thm:zero_weight_component_compression}
Suppose there exists an algorithm $\mathcal{A}$ that computes an $a$-approximation of APSP on weighted undirected graphs with \underline{positive} integer edge weights in $f(n)$ rounds in the \clique{} model. Then $\mathcal{A}$ can be extended to handle graphs with \underline{nonnegative} integer edge weights, yielding an $a$-approximation of APSP in $f(n) + O(1)$ rounds in the \clique{} model. Moreover, if $\mathcal{A}$ is deterministic, then the new algorithm remains deterministic.
\end{restatable}

The main idea is as follows. Given a graph $G$, we first identify clusters of nodes that are at distance $0$ from each other and compress each such cluster into a single node. We then run algorithm $\mathcal{A}$ on the resulting compressed graph and use its output to compute an $a$-approximation of APSP on the original graph $G$. See \Cref{sec:app_prelim} for the full proof of \Cref{thm:zero_weight_component_compression}.

\subsection{Useful Tools}

We use the following routing algorithms. First, we use Lenzen's routing algorithm \cite{Lenzen-routing}.
\begin{lemma}[name=\cite{Lenzen-routing}]\label{lemma:Lenzen-routing}
	There is a deterministic algorithm that delivers all the messages to the destinations within $O(1)$ rounds, where each node has $O(n)$ messages, each message consists of $O(\log n)$ bits of content, and each node is the target of $O(n)$ messages.
\end{lemma}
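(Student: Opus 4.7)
The plan is to design a deterministic two-phase routing scheme in the spirit of Valiant's paradigm. In the first phase every source scatters its $O(n)$ messages across the $n$ intermediate nodes of the clique so that each intermediate ends up with $O(1)$ messages from each source, and in the second phase each intermediate forwards what it received to the true destinations. Since the first phase uses only links leaving the sources and the second phase uses only links entering the destinations, correctness reduces to showing that, in every round, the load on each directed link $x \to y$ is bounded by $O(1)$, which then fits a single round of the \clique model.

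I would set things up as follows. By padding with dummy messages, assume that every node holds exactly $cn$ messages (as a source) and is the target of exactly $cn$ messages, for some constant $c$; equivalently, there is an $n \times n$ nonnegative integer matrix $M$ with row and column sums equal to $cn$, and the task is to realise $M$ in $O(1)$ rounds of all-to-all communication. Phase 1 is then immediate as far as source-side links go: each source $u$ sends exactly $c$ of its messages to each of the $n$ intermediates, so link $u \to i$ carries only $c$ messages. The nontrivial question is how to \emph{choose} which messages go to which intermediate so that Phase 2 also fits in $O(1)$ rounds — that is, so that for every ordered pair $(i,v)$, intermediate $i$ holds only $O(1)$ messages destined for $v$.

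The key subroutine is a global sort of messages by destination ID. Once all messages are sorted by destination and then redistributed evenly across the intermediates, the messages bound for any fixed destination $v$ occupy a contiguous block of length at most $cn$ in the sorted order, so after even redistribution each intermediate holds at most $\lceil cn/n \rceil + 1 = O(1)$ messages for $v$; Phase 2 is then a single round of direct delivery with link load $O(1)$. This reduces the lemma to a deterministic $O(1)$-round sort of $O(n^2)$ items of $O(\log n)$ bits each in the \clique model, which can be implemented by a constant-depth bucketing argument: deterministically pick a small set of splitters (for instance from a polynomial-universe sample held at a single node and then broadcast), route each item to the bucket identified by its destination ID using a Phase-1-style even scatter, and recurse; since the key universe has size $\poly(n)$ the recursion terminates in $O(1)$ levels.

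The main obstacle is precisely this sorting/bucketing primitive: a naive scatter without any coordination between sources allows a destination $v$ that is hit heavily by one source to receive all of its messages through the same intermediate, overflowing the $i \to v$ link in Phase 2. The sort forces a global balance that no source-local rule can guarantee on its own. Once the sort is in hand, the two-phase scatter-gather and the $O(1)$ link-load bound follow by a straightforward counting argument on rows and columns of $M$, and the overall round count is a constant times the depth of the sorting recursion, which is $O(1)$.
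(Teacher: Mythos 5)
This lemma is not proved in the paper at all: it is quoted as a black-box result of Lenzen \cite{Lenzen-routing}, so strictly speaking there is no in-paper proof to compare against, and what you have written is an attempt to reprove Lenzen's routing theorem from scratch. Your two-phase scatter--gather framing and the observation that, after a global sort by destination and a striped (round-robin) redistribution, each intermediate holds at most $\lceil cn/n\rceil+1=O(1)$ messages per destination, are both fine. The genuine gap is the sorting primitive you lean on. Deterministically sorting $\Theta(n^2)$ keys in $O(1)$ rounds of the \clique model is itself a major theorem (it is the companion result in Lenzen's paper, with a substantially more involved proof), and your sketch of it is circular: with keys equal to destination IDs, each key class already has size $O(n)$, so ``bucketing by destination ID'' with one bucket per key \emph{is} exactly the routing instance you set out to solve --- each source holds $O(n)$ items and each bucket must receive $O(n)$ items --- and a ``Phase-1-style even scatter'' cannot place items into their buckets, for precisely the reason you yourself give when explaining why naive scattering fails (it balances load source-side only, with no destination-side coordination).

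The splitter step has the same problem at a smaller scale: no single node can deterministically collect a representative sample of the $n^2$ items, since it can receive only $O(n)$ words in $O(1)$ rounds; a regular-sampling scheme (each node locally sorts and contributes $O(1)$ evenly spaced candidates) is the standard fix, but it only bounds bucket sizes up to a constant factor and still leaves you with the task of routing all items to their buckets, which is again the original problem. So as written, the argument reduces routing to sorting and sorting back to routing, and the recursion you invoke (``$O(1)$ levels over a $\poly(n)$ key universe'') never bottoms out in a step you have actually shown how to execute. To close the gap you would need the real content of Lenzen's algorithm --- a destination-aware balancing phase in which nodes exchange and aggregate counts of how many messages they hold per destination and then deterministically reassign messages to intermediates based on that global information --- or else simply cite the result, as the paper does.
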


The following lemma from \cite{CensorHillel2021FastDA} generalizes Lenzen's routing algorithm to relax the condition that each node can only send $O(n)$ messages of $O(\log n)$ bits, as long as each node only receives $O(n)$ messages, and some technical conditions are satisfied. %

\begin{lemma}[{\cite[Corollary 7]{CensorHillel2021FastDA}}] \label{lemma:improved-routing}
	In the deterministic \clique model, given each node starts with $O(n \log n)$ bits of input, and at most $O(1)$ rounds have passed since the start of the algorithm, any routing instance where each node is the target of $O(n)$ messages of $O(\log n)$ bits can be performed in $O(1)$ rounds. %
\end{lemma}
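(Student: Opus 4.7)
The plan is to reduce this asymmetric routing instance --- where receive load at each node is $O(n)$ but send load need not be --- to the symmetric setting handled by Lenzen's algorithm (\cref{lemma:Lenzen-routing}), which requires both send-load and receive-load at each node to be $O(n)$. Two preliminary facts frame the approach. First, since every node receives at most $O(n)$ messages, the total number of messages to deliver is $T = \sum_v r_v \le n \cdot O(n) = O(n^2)$, so the average send load is $O(n)$ as well. Second, after $O(1)$ rounds each node holds only $O(n\log n)$ bits of information, equivalently at most $O(n)$ distinct words of content, so any source whose send-load is $\omega(n)$ must be sending a few distinct words to many destinations.

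My approach would be a two-phase relay scheme. Using only the IDs of the sources and destinations attached to each message (which every node can determine locally from its own state), I pre-assign each of the $T = O(n^2)$ messages to one of $n$ relay nodes so that each relay is responsible for exactly $\Theta(n)$ messages, both in how many it receives from sources and in how many it later forwards. The assignment can be made deterministic and locally computable by sorting messages canonically (e.g. lexicographically by $(\text{source-ID}, \text{destination-ID})$) and slicing the list into $n$ contiguous buckets. Phase 2, relay-to-destination, is then itself a Lenzen-eligible routing instance --- each relay sends $O(n)$ messages, and each destination receives $O(n)$ messages by the original hypothesis --- so \cref{lemma:Lenzen-routing} completes it in $O(1)$ rounds.

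The main obstacle lies in Phase 1, because a source with outgoing load $\omega(n)$ cannot possibly transmit all of its messages to the relays within the Lenzen budget of $O(n)$ sends per node. To overcome this I would exploit the second preliminary observation: since the heavy source has only $O(n)$ distinct content words, the bulk of its outgoing traffic is repetition. In Phase 1 each source transmits each of its $O(n)$ distinct words once, together with a short descriptor identifying which relays need a copy and for which destinations, to a balanced committee of $O(1)$ \emph{coordinator} relays per word. A further Lenzen-eligible subinstance, in which both the send and receive loads are $O(n)$ by construction, then disseminates each word to the remaining relays that need it, so that at the start of Phase 2 each relay has exactly the $O(n)$ (word, destination) pairs it is responsible for. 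Showing that a single such balanced descriptor-based dissemination simultaneously brings every relay's state up to date --- without blowing up any node's send or receive load beyond $O(n)$ --- is the core technical step I expect to be the hardest, and is where the exact structure of the $O(n\log n)$-bit bound on stored information must be carefully accounted for.
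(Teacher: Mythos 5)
Your proposal targets a statement that the paper itself does not prove (it is imported as Corollary~7 of \cite{CensorHillel2021FastDA}), so I compare it against the argument used there. The decisive gap is your second ``preliminary fact'': that a source with send-load $\omega(n)$ must be repeating a few distinct words. This does not follow from the hypothesis and is false in general. The hypothesis only bounds the sender's \emph{state} by $O(n\log n)$ bits; the outgoing messages are functions of that state and of the destination (and copy index), so they can comprise up to $\Theta(n^2)$ pairwise distinct $O(\log n)$-bit contents even though the whole collection is \emph{computable} from $O(n)$ stored words (e.g., state $x_1,\dots,x_n$ and the $k$-th message to destination $t$ being $x_{(t+k)\bmod n}+kt$ reduced modulo $2^{\lceil\log n\rceil}$). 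The redundancy the lemma exploits is informational---anyone holding the sender's state can recompute its messages---not literal duplication of contents. This undermines the whole Phase~1 mechanism: there may be $\omega(n)$ distinct words; even when there are only $O(n)$, the ``short descriptor'' of which relays/destinations need a given word can require $\Omega(n)$ bits, so it is not short; and you yourself flag the balanced descriptor dissemination as the unproven core step. A further problem is the claim that the assignment of the $T=O(n^2)$ messages to relays by a global lexicographic sort on (source-ID, destination-ID) is locally computable: it requires knowledge of the demand matrix $m(s,t)$, and making enough of it known to the relays is itself a routing task in which a heavy source would have to ship $\omega(n)$ words of metadata---exactly the obstacle you set out to avoid.

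The actual proof delegates \emph{computation} rather than content. A source $s$ with $k_s$ outgoing messages is assigned $\lceil k_s/n\rceil$ helper nodes (feasible since $\sum_s k_s = O(n^2)$ gives $O(n)$ helpers in total, and the assignment is computable by all nodes after each source broadcasts the single number $k_s$). Each helper obtains the \emph{entire} $O(n\log n)$-bit state of $s$, from which it locally recomputes the full outgoing list of $s$ in a canonical order and delivers the block of at most $n$ messages corresponding to its index among $s$'s helpers; no per-message metadata ever needs to be transmitted. The state reaches the helpers through Lenzen-eligible instances: first every source splits its state word-by-word across the $n$ nodes (node $j$ holds the $j$-th word of every source's state, so it receives $O(n)$ words), then every helper pulls one word from each holder (each holder sends $O(n)$ words, each helper receives $O(n)$), and the final delivery is Lenzen-eligible because each target receives $O(n)$ messages by hypothesis. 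So your relay-and-balance instinct, and the use of \cref{lemma:Lenzen-routing} for each balanced sub-instance, are on the right track, but the missing idea is shipping the generating state and recomputing at the helpers, which simultaneously removes the false ``few distinct words'' premise, the descriptor problem, and the need for a globally known demand matrix.
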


Intuitively, if the messages sent by each node exhibit a high degree of duplication or redundancy, then the overall communication can be carried out efficiently. In the setting of \Cref{lemma:improved-routing}, each node begins with \( O(n \log n) \) bits of input, and at most \( O(1) \) rounds have passed since the start of the algorithm. This means that the messages sent by a node can be fully determined from just \( O(n \log n) \) bits of information. If a node wishes to send more than \( O(n) \) messages of \( O(\log n) \) bits each, we can allocate some helper nodes to assist in transmitting these messages.

As an example of how the lemma is applied, consider the case where a node wants to broadcast an \( O(n \log n) \)-bit message to all other nodes. Although the total communication volume is \( O(n^2 \log n) \) bits, the entire message is determined by just \( O(n \log n) \) bits of input. This redundancy allows us to apply \Cref{lemma:improved-routing} and complete the routing in \( O(1) \) rounds.

\section{Technical Overview} \label{sec:tech_overview}

Our algorithm relies crucially on the following lemma, which transforms an $a$-approximation into an $O(\sqrt{a})$-approximation in a constant number of rounds.  
Starting from an $O(\log n)$-approximation, which can be computed in $O(1)$ rounds using the result from~\cite{chechik2022constant}, we repeatedly apply the algorithm for $O(\log \log \log n)$ iterations.  
This yields successively better approximations: $O(\sqrt{\log n})$, $O(\log^{1/4} n)$, $O(\log^{1/8} n)$, and so on.  
However, the following lemma only applies when $a \in (\log d)^{\Omega(1)}$, where $d$ is the weighted diameter. Beyond this regime, a different strategy is required.

\begin{lemma}[name=Approximation factor reduction,restate=approximatereduction]\label{approximate-reduction}
	Let $a \geq 1$ be a given value such that $a  \in  O(\log n)$.
	Assume the weighted diameter $d$ of the  weighted undirected graph $G$ satisfies $\log d  \in  a^{O(1)}$.
There is a constant-round algorithm in the \clique model that, given an $a$-approximation of APSP, w.h.p.~computes a $15\sqrt{a}$-approximation of APSP. 
\end{lemma}

\subsection{Building Blocks} To prove the above lemma for approximation factor reduction, we use several building blocks, each of which may be of independent interest.

\paragraph{Fast computation of $\sqrt{n}$-nearest $\beta$-hopsets (\Cref{sec:hopset}).} As we will see, a main building block in our algorithm is a fast algorithm that allows each node to compute the distances to its $k$-nearest nodes. In order to get an efficient algorithm for this problem, we use hopsets.

A \emph{$\beta$-hopset} is a set of edges $H$ added to the graph $G$ such that, in $G \cup H$, for any pair of nodes $(u,v)$, there is a path of at most $\beta$ edges with length $d_G(u,v)$, where $d_G(u,v)$ is the distance between $u$ and $v$ in the original graph $G$. Hopsets allow us to consider only paths with a small number of hops when computing distances. For this reason, hopsets have found many applications for computing or approximating distances, especially in distributed and parallel settings. However, existing algorithms for constructing hopsets in the \clique model require at least a polylogarithmic number of rounds~\cite{censorhillel2019fast,DBLP:conf/opodis/Nazari19,dory2022exponentially}. 

As we aim for a faster running time, we need a different approach: We introduce a new type of hopset called a \emph{$k$-nearest $\beta$-hopset}. In this hopset, we are guaranteed to have $\beta$-hop paths that preserve the distances only for pairs of nodes $u$ and $v$ such that $v$ is among the $k$-nearest nodes to $u$, or vice versa. As our first goal is just to compute distances to the $k$-nearest nodes, a $k$-nearest $\beta$-hopset is good enough for our needs. 

We show a surprisingly simple $O(1)$-round algorithm for constructing a $k$-nearest hopset $H$ given an $a$-approximation for APSP. We focus on the case of $k=\sqrt{n}$. 

\begin{lemma}[name=Computing $\sqrt{n}$-nearest $\beta$-hopsets, restate=ahop]
	\label{approximate-to-hop}
Let $G$ be a weighted directed graph with weighted diameter $d$, and assume we are given an 
$a$-approximation of APSP. We can deterministically compute a $\sqrt{n}$-nearest $\beta$-hopset $H$ in $O(1)$ rounds, where $\beta \in O(a \log d)$. 
\end{lemma}

At a high level, to construct the hopset $H$, each node $u$ starts by computing a set $\tilde N_{\sqrt n}(u)$ of the approximate  $\sqrt{n}$-nearest nodes to $u$. This set is computed by taking the $k$-nearest nodes according to the given $a$-approximation. After that, $u$ learns from each node $v \in \tilde N_{\sqrt n}(u)$ about the $\sqrt{n}$ minimum-weight edges incident to $v$. The node $u$ computes shortest paths in the subgraph induced by the received edges, including also all the edges incident to $u$. Based on the computed shortest paths, $u$ adds edges to the hopset $H$. The algorithm takes $O(1)$ rounds, as each node just needs to learn $O(n)$ edges, which can done in $O(1)$ rounds using \Cref{lemma:improved-routing}. 

We prove that the computed edge set $H$ is a $\sqrt{n}$-nearest $\beta$-hopset for $\beta \in O(a \log d)$. To get an intuition for the proof, denote by $\ell(u)$ the smallest distance such that there are at least $\sqrt{n}$ nodes at distance at most $\ell(u)$ from $u$. We can show that the set $\tilde N_{\sqrt n}(u)$ of the approximate $\sqrt{n}$-nearest nodes already contains all nodes at distance at most $(\ell(u)-1)/a$ from $u$. We can use this fact to show that for any node $v$ in $u$'s $\sqrt{n}$-nearest nodes, we can use two edges of $G \cup H$ to cover at least $\frac{1}{a}$ fraction of the distance to $v$. We next show a triangle-inequality-like property for the values $\ell(u)$ that allows us to prove that we can use two other edges to cover $\frac{1}{a}$ fraction of the remaining distance, and so on. Therefore, within $O(a \log{d})$ hops, we reach $v$. %

\paragraph{Fast computation of the $k$-nearest nodes (\Cref{sec:knearest}).} Our next goal is to use the hopset to compute the $k$-nearest nodes for each node. Naively this may take $O(a \log{d})$ rounds, as this is the number of hops in the hopset. A faster algorithm can be obtained using fast matrix multiplication as shown in \cite{censorhillel2019fast,dory2022exponentially}. At a high level, this approach works in $i$ iterations where in iteration $i$ nodes learn about paths with $2^i$ hops. Using this approach in our case will take $O(\log{(a \log{d})})\subseteq O(\log{\log{n}})$ rounds, as our initial approximation is $a \in O(\log{n})$ and the weighted diameter is $d \in n^{O(1)}$. Recall that we assume that the edge weights are polynomial, as standard in this model. This round complexity is still too high. To get a faster algorithm, we identify cases in which the $k$-nearest nodes can be computed in just $O(1)$ rounds.

\begin{lemma}[name=Computing $k$-nearest nodes,restate=matrixexp]\label{fast-matrix-exp}
Let $k$, $h$, and $i$ be positive integers satisfying $k \in O(n^{1/h})$.
	Given a weighted directed graph $G$ and a %
 $k$-nearest $h^{i}$-hopset $H$ for $G$,
	 each node $v$ can deterministically compute the exact distances
	to all nodes in $N_k(v)$ in $O(i)$ rounds. 
\end{lemma}

We emphasize that this lemma allows for computing distances to the $k$-nearest nodes in $O(1)$ rounds even for \emph{non-constant} $k$. As an example, we can combine it with our $\sqrt{n}$-nearest $\beta$-hopset to compute distances to $k\in 2^{\Theta\left(\sqrt{\log{n}}\right)}$-nearest nodes in $O(1)$ rounds. To see this, we may choose $h \in \Theta\left(\sqrt{\log{n}}\right)$ and $i=4$, as the number of hops in our hopset is $\beta \in O(a \log n) \subseteq O(\log^2{n})$.

The main idea behind the proof of \Cref{fast-matrix-exp} is to distribute the edges of the graph $G \cup H$ among the nodes so that each relevant $h$-hop path from a node $u$ to any of its $k$-nearest nodes $v$ is known to some node $w$, which can then send the information back to $u$. The approach is inspired by algorithms for path listing~\cite{CensorHillel2021FastDA} and neighborhood collection~\cite{chechik2022constant}, designed for \emph{sparse graphs}---specifically, graphs without short cycles in one case~\cite{CensorHillel2021FastDA}, and low-degree graphs in the other~\cite{chechik2022constant}. 
This approach has been known for a long time and has been rediscovered several times. It plays a key role in the design of various $O(1)$-round algorithms in the $\clique$ model, for MST~\cite{jurdzinski2018mst}, coloring~\cite{chang2019complexity}, and, more recently, MIS in sparse graphs~\cite{censor2025mis}.

While in our case the input graph is not necessarily sparse, we can exploit the fact that we are only interested in a sparse part of the output---namely, the $k$-nearest nodes---to design an efficient algorithm. To do so, we distribute the edges in a certain way that indeed allows to collect the distances to the $k$nearest nodes efficiently. %

\paragraph{Skeleton graphs (\Cref{sec:skeleton}).} Finally, our goal is to extend the computed distances to the $k$-nearest nodes into approximate distances between all pairs of nodes. To achieve this, we construct a \emph{skeleton graph} $G_S$ with $\tilde{O}(n/k) \ll n$ nodes, such that an approximate solution to APSP on $G_S$ can be used to compute an approximate solution to APSP on the original graph $G$.

\begin{lemma}[name=Skeleton graphs -- simplified version, restate=skel] \label{lemma_skeleton_simplified}
Let $k$ be a positive integer. Suppose each node $u$ knows the distances to all nodes in $N_k(u)$ in a weighted undirected graph $G$. Then, w.h.p., we can construct a weighted undirected graph $G_S$ over a subset $V_S \subseteq V$ of size $O\left(\frac{n \log k}{k}\right)$ in $O(1)$ rounds. Moreover, given an $l$-approximation of APSP on $G_S$, we can deterministically compute a $7l$-approximation of APSP on $G$ in $O(1)$ rounds.
\end{lemma}

At a high level, \Cref{lemma_skeleton_simplified} reduces the task of computing APSP on $G$ to computing APSP on a smaller graph $G_S$. The smaller the skeleton graph $G_S$, the easier it becomes to approximate APSP on $G$. For instance, if $G_S$ has fewer than $\sqrt{n}$ nodes, then we can broadcast all of its edges in $O(1)$ rounds and compute exact APSP on $G_S$, which then yields an $O(1)$-approximation for APSP on $G$.

Other types of skeleton graphs have been used previously in the computation of shortest paths. The most standard example~\cite{ullman1990high} we are aware of is to sample a set of $\tilde{O}(n/k)$ nodes and connect pairs of sampled nodes that are at distance at most $\tilde{O}(k)$ from each other. This approach can lead to a round complexity that depends on $k$. Moreover, since $\tilde{O}(n/k)$ can still be large, it is unclear how to efficiently compute distances from the set of sampled nodes, even if the paths are short.

We take a different approach. Instead of connecting sampled nodes based on their distance in $G$, we connect pairs that are linked by a path of $O(1)$ edges with a specific structure in $G$, after augmenting $G$ by adding edges from each node to its $k$-nearest nodes, where the weight of each added edge corresponds to the known shortest-path distance. By leveraging the structure of these short paths and the fact that they consist of only $O(1)$ edges, we show that the skeleton graph $G_S$ can be constructed in $O(1)$ rounds, independent of $k$. Furthermore, we prove that using this skeleton graph only incurs a constant-factor loss in the approximation.

\subsection{APSP Approximation in \texorpdfstring{$O(\log \log n)$}{O(log log n)} Rounds}
The ingredients described above already suffice to obtain an $O(1)$-approximation for APSP in general graphs within $O(\log \log n)$ rounds, as follows:
\begin{enumerate}
\item Use the algorithm of \cite{chechik2022constant} to compute an $O(\log n)$-approximation of APSP in $O(1)$ rounds.
\item Apply \Cref{approximate-to-hop} with $a \in O(\log n)$ and $d \in n^{O(1)}$ to construct a $\sqrt{n}$-nearest $O(a \log d) \subseteq O(\log^2 n)$-hopset in $O(1)$ rounds.
\item Apply \Cref{fast-matrix-exp} with $k = \sqrt{n}$, $h = 2$, and $i \in O(\log \log n)$ to compute distances to the $\sqrt{n}$-nearest nodes in $O(\log \log n)$ rounds.
\item Apply \Cref{lemma_skeleton_simplified} with $k = \sqrt{n}$ to construct a skeleton graph $G_S = (V_S, E_S)$ with $O\left(\sqrt{n} \log n\right)$ nodes in $O(1)$ rounds.
\item Use the algorithm of \cite{chechik2022constant} to compute a 3-spanner of $G_S$ with $O(|V_S|^{1.5}) \subseteq O(n)$ edges and broadcast it to the entire network to get a 3-approximation of APSP on $G_S$ in $O(1)$ rounds.
\item Finally, by \Cref{lemma_skeleton_simplified}, this 3-approximation of APSP on $G_S$ can be translated to a 21-approximation of APSP on $G$ in $O(1)$ rounds.
\end{enumerate}
We compute a 3-spanner of $G_S$ instead of broadcasting its full topology because $G_S$ can have up to $O\left(|V_S|^2\right) \subseteq O(n \log^2 n)$ edges, which would require $O(\log^2 n)$ rounds to broadcast in the standard \clique model. However, if we are in the $\clique[\log^3 n]$ model, then the increased bandwidth allows us to broadcast the entire topology of $G_S$ in $O(1)$ rounds. This enables exact computation of APSP on $G_S$, which can then be translated into a 7-approximation of APSP on $G$.

 \subsection{APSP Approximation in \texorpdfstring{$O(\log \log \log n)$}{O(log log log n)} Rounds}

The bottleneck of the above algorithm lies in the step where we apply \Cref{fast-matrix-exp} to compute distances to the $k = \sqrt{n}$ nearest nodes. To reduce the round complexity to $O(\log \log \log n)$, we can no longer afford to compute distances to that many nodes. In particular, to perform this step in $O(1)$ rounds, we must restrict $k$ to $O(n^{1/h})$ for some $h$ such that $h^{O(1)}$ is the hop bound of the given hopset. We first show how to overcome this limitation and obtain an $O(\log \log \log n)$-round algorithm for graphs whose weighted diameter $d$ satisfies $\log d \in a^{O(1)}$, and then extend the result to general graphs.

\paragraph{APSP approximation in small weighted diameter graphs (\Cref{sec:algos}).}
To overcome this limitation, we proceed in iterations: In each iteration, we improve the APSP approximation and gradually increase the value of $k$ that we can handle. As a result, both the number of nodes in the skeleton graph $G_S$ and the number of hops in the hopset decrease over time.

More concretely, we combine the above ingredients to prove \Cref{approximate-reduction}, which shows that an $a$-approximation can be improved to an $O(\sqrt{a})$-approximation in $O(1)$ rounds, provided that the weighted diameter $d$ of $G$ satisfies $\log d \in a^{O(1)}$.

\begin{enumerate}
	\item Start with a given $a$-approximation of APSP. %
	\item Compute a $\sqrt{n}$-nearest $O(a\log{d})$-hopset using \Cref{approximate-to-hop}.
	\item Use the computed hopset and \Cref{fast-matrix-exp} to compute the distances to the $k$-nearest nodes, for a carefully chosen value of $k$.
	\item Compute a skeleton graph $G_S$ with $\tilde{O}(n/k)$ nodes such that an approximation of APSP on $G_S$ can be used to compute an approximation of APSP on by \Cref{lemma_skeleton_simplified}.
	\item Utilizing the fact that $G_S$ has only $\tilde{O}(n/k)$ nodes, compute an $O(\sqrt{a})$-approximation of APSP on $G_S$ in $O(1)$ rounds, which gives an $O(\sqrt{a})$-approximation of APSP on $G$.
\end{enumerate}

To complete the description of the algorithm, we should specify the value of $k$, and explain how we compute approximate APSP on $G_S$ in the last step. All other steps take $O(1)$ rounds. To specify the value of $k$, we focus first on the case of $a \in O(\log{n})$. Intuitively, we would like to choose $h \in \Theta\left(\sqrt{\log{n}}\right)$ and $k \in n^{\Theta\left(1/\sqrt{\log{n}}\right)}=2^{\Theta\left(\sqrt{\log{n}}\right)}$. It can be verified that with this choice of parameters, we can use \Cref{fast-matrix-exp} to compute the distances to the $k$-nearest nodes in $O(1)$ rounds. Now, if the number of nodes in $G_S$ is $\tilde{O}(n/k) = n/2^{\Theta(\sqrt{\log{n}})}$, then we can compute an $O(\sqrt{\log{n}})$-approximation for APSP in $G_S$ as follows. We construct an $O(\sqrt{\log{n}})$-spanner of $G_S$ with $O(n/k)^{1+\Theta\left(1/\sqrt{\log{n}}\right)}\subseteq O(n)$ edges. Such spanners can be constructed in $O(1)$ rounds~\cite{chechik2022constant}. Since the spanner has $O(n)$ edges, we can broadcast it to the whole graph and get an $O(\sqrt{\log{n}})$-approximation of the distances in $G_S$ and in $G$, as needed. 

Informally, if the weighted diameter constraint can be entirely ignored, then starting from an $O(\log n)$-approximation and applying the approximation factor reduction lemma iteratively for $O(\log \log \log n)$ rounds yields an $O(1)$-approximation for APSP.

More precisely, suppose the weighted diameter $d$ of the original graph $G$ satisfies $d \in (\log n)^{O(1)}$. In this case, the iterative approximation factor reduction approach can only improve the approximation ratio to $\left( \log \log n \right)^{O(1)}$. Beyond this point, the precondition of \Cref{approximate-reduction} is no longer satisfied, so we must switch to a different method. To further improve the approximation ratio from $\left( \log \log n \right)^{O(1)}$ to $O(1)$, we apply the $O(\log \log n)$-round approach discussed earlier. Since we now start with a $\left( \log \log n \right)^{O(1)}$-approximation, the number of hops in the resulting hopset is also $\left( \log \log n \right)^{O(1)}$. This allows us to use $i \in O(\log \log \log n)$ in the step that computes distances to the $\sqrt{n}$-nearest nodes. Consequently, the overall round complexity is $O(\log \log \log n)$, as desired.

\paragraph{APSP approximation in general graphs (\Cref{sect:APSPgeneral}).} 
The final challenge is to eliminate the assumption that the weighted diameter satisfies $\log d \in a^{O(1)}$. To overcome this, we prove a \emph{weight scaling} lemma that reduces distance approximation on the original graph to distance approximation on \( O(\log n) \) graphs \( G_1, G_2, \ldots, G_{O(\log n)} \), each with weighted diameter \( (\log n)^{O(1)} \). The reduction applies to any pair of nodes within \( (\log n)^{O(1)} \) hops. The key intuition is that distances that are exponential in $i$ are handled by the graph \( G_i \).

Due to the \( (\log n)^{O(1)} \)-hop constraint of our weight scaling lemma, we must first compute a hopset before applying it. However, the hopset only guarantees distance approximation between each node and its \( \sqrt{n} \)-nearest nodes. Consequently, even after computing \( O(1) \)-approximate APSP on the graphs \( G_1, G_2, \ldots, G_{O(\log n)} \), we can only hope to recover \( O(1) \)-approximate distances to the \( \sqrt{n} \)-nearest nodes in the original graph \( G \).
 
Recall from our \( O(\log \log n) \)-round \( O(1) \)-approximate APSP algorithm that, once we have the \emph{exact} distances to the \( \sqrt{n} \)-nearest nodes, we can construct a skeleton graph and use it to compute \( O(1) \)-approximate APSP. In the full version of \Cref{lemma_skeleton_simplified}, we show that a skeleton graph can still be built even from \emph{approximate} distances to the \( \sqrt{n} \)-nearest nodes, allowing us to obtain \( O(1) \)-approximate APSP in the end.

However, running an algorithm in parallel on the \( O(\log n) \) graphs \( G_1, G_2, \ldots, G_{O(\log n)} \) introduces an additional \( O(\log n) \) factor in bandwidth. To mitigate this, we once again leverage skeleton graphs. In \( O(1) \) rounds, we construct a skeleton graph \( G_S = (V_S, E_S) \) with \( |V_S| = n / \poly(\log n) \) nodes and reduce the APSP approximation problem on \( G \) to one on \( G_S \). This guarantees that each node in \( G_S \) has a bandwidth budget of \( |V_S| \cdot \poly(\log n) \).

\section{Fast Computation of \texorpdfstring{$\sqrt{n}$-Nearest  $\beta$-Hopsets}{sqrt(n)-Nearest beta-Hopsets}} \label{sec:hopset}

In this section, we prove \Cref{approximate-to-hop}.

\ahop*

We emphasize that, although this paper focuses on APSP in \emph{undirected} graphs, \Cref{approximate-to-hop} also holds for \emph{directed} graphs. The graph $G$ considered in this section is assumed to be directed.

\paragraph{Intuition behind the proof.} 
Our goal is to use an $a$-approximation for APSP to add shortcut edges such that, for each node $v$, all nodes in $N_{\sqrt{n}}(v)$ can be reached from $v$ within a small number of hops. In other words, we begin with \emph{approximate} distances and aim to construct an \emph{exact} hopset.

The algorithm proceeds as follows. For a given node $v$, we consider its approximate $\sqrt{n}$-nearest nodes, denoted $\tilde{N}_{\sqrt{n}}(v)$. Each node $u \in \tilde{N}_{\sqrt{n}}(v)$ sends to $v$ its $\sqrt{n}$ shortest outgoing edges. Using this information, node $v$ updates its approximate distances to nodes in $\tilde{N}_{\sqrt{n}}(v)$ and adds shortcut edges accordingly. We show that, for a significant fraction of the nodes in $\tilde{N}_{\sqrt{n}}(v)$, the resulting distances are \emph{exact}.

More concretely, we prove that for any node $u$ in the \emph{exact} set of $\sqrt{n}$-nearest nodes ${N}_{\sqrt{n}}(v)$, a shortcut edge combined with one edge from the original graph suffices to cover at least a $1/a$ fraction of the distance from $v$ to $u$. Repeating this process reduces the remaining distance by a $1/a$ factor in each step. Hence, node $u$ can be reached from $v$ within $O(a \log d)$ hops, where $d$ is the weighted diameter of $G$.

\subsection{Algorithm Description}

Let $\delta $ be a given approximation of APSP satisfying that for all $u,v \in V$, we have \[d(u,v) \leq \delta(u,v) \leq a \cdot d(u,v).\]
The algorithm for construction the hopset $H$ is as follows.
\begin{enumerate}
	\item Each node $v$ computes the set $\tilde{N}_{\sqrt{n}}(v)$ consisting of the $\sqrt{n}$ nodes $u$ with the smallest values of $\delta(v, u)$, breaking ties by $\ID$s.
	\item  Each node  $v$ asks each node $u  \in \tilde N_{\sqrt n}(v)$ for the $\sqrt n$ shortest outgoing edges from $u$ in $G$.\label{collect_edges}
	\item  Each node  $v$ runs a shortest-path algorithm on the subgraph of $G$ induced by all the edges $v$ received in the step above, together with all the outgoing edges from $v$ in $G$. \label{step:localdist}
	\item Each node $v$ adds an edge $e = (v, u)$ with weight $d'(v, u)$ to the hopset $H$, where $d'(v, u)$ is the distance from $v$ to $u$ computed by node $v$ in the step above. \label{step:addedges}
\end{enumerate}

We emphasize that in both Step~\ref{collect_edges} and Step~\ref{step:localdist}, the underlying distance metric is \( d \) and not \( \delta \). In Step~\ref{step:addedges}, although the edge \( e \) is initially known only to node \( v \) and not to node \( u \), we can ensure that each edge in the hopset is known to both endpoints by simply having \( v \) send the edge \( e \) to \( u \). This communication can be done in parallel for all such pairs \( (v, u) \) in a single round. It is possible that $d'(v,u) \neq d'(u,v)$ even if $G$ is undirected.

\begin{claim}\label{claim:round_complexity}
	The above algorithm can be implemented in $O(1)$ rounds of \clique.
\end{claim}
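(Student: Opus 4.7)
The plan is to walk through the five steps in the algorithm description and confirm each is either a local computation or an $O(1)$-round routing instance handled by \cref{lemma:Lenzen-routing} or \cref{lemma:improved-routing}. Step~1 is local: node $u$ already holds the row $\delta(u, \cdot)$ of the given $a$-approximation as part of its input, so $\tilde{N}_{\sqrt{n}}(u)$ is selected without any communication. Steps~3 and~4 are local shortest-path computations on the edges that $u$ will have gathered after Step~2. Step~5 (recording the new hopset edge $(u,v)$ at both endpoints, if one wants the hopset to be known on both sides) is a routing instance with at most $\sqrt{n}$ outgoing and at most $n-1$ incoming announcements per node, which fits \cref{lemma:Lenzen-routing}.

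The substantive step is Step~2, and I expect the main obstacle to lie there. Each requester $u$ asks $\sqrt{n}$ nodes for $\sqrt{n}$ edges each, hence receives $O(n)$ edge tokens in total, safely within the in-degree bound of Lenzen's routing. However, a popular responder $v$ may lie in $\tilde{N}_{\sqrt{n}}(u)$ for every other node $u$, forcing $v$ to transmit its $\sqrt{n}$-edge list to up to $n-1$ requesters, a total of $\Theta(n^{3/2})$ messages, which exceeds the $O(n)$ per-sender budget of \cref{lemma:Lenzen-routing}. The key observation is that $v$'s outgoing transcript is highly redundant: the same list of $O(\sqrt{n} \log n)$ bits (its $\sqrt{n}$ cheapest adjacent edges) is sent to every requester. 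This is precisely the setting of \cref{lemma:improved-routing}: each node begins Step~2 with $O(n \log n)$ bits of state (its input edges, the $a$-approximation row, and the $\sqrt{n}$-size index set produced by Step~1), only $O(1)$ rounds have elapsed, and each node is the target of only $O(n)$ messages, so the hypotheses of \cref{lemma:improved-routing} are satisfied and the bulk reply completes in $O(1)$ rounds.

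To make each responder $v$ aware of which $u$'s to serve, I would split Step~2 into two substeps. First, each $u$ sends a single "you are in $\tilde{N}_{\sqrt{n}}(u)$" request to each of its $\sqrt{n}$ selected targets; this is a $\sqrt{n}$-out, $(n-1)$-in instance and is handled by \cref{lemma:Lenzen-routing}. Second, $v$ performs the bulk reply analyzed above via \cref{lemma:improved-routing}. Summing the constant-round contributions of the five steps yields the claimed $O(1)$-round bound. The only delicate point I foresee is verifying the $O(n \log n)$-state hypothesis of \cref{lemma:improved-routing} at the start of the bulk transmission, which is immediate since Step~1 only writes an index set of size $\sqrt{n}$ and the preceding request substep delivers at most $n$ single-bit flags.
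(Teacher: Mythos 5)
Your proof is correct and follows essentially the same route as the paper: the only communication is the edge-collection step, each node receives only $O(n)$ messages after $O(1)$ rounds, so \cref{lemma:improved-routing} applies (the paper treats the request notification and the final hopset-edge step as implicit/local, but your extra handling of them via \cref{lemma:Lenzen-routing} is harmless and within its bounds).
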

\begin{proof}
        The only part that requires communication is Step~\ref{collect_edges}. In this routing task, every node is the destination of $O(n)$ messages.
        At the start of the algorithm, each node holds $O(n \log n)$ bits of information. Before the routing task,  only one round has passed, due to sending the requests.
        Thus, we can apply \Cref{lemma:improved-routing} to complete the routing task in $O(1)$ rounds.
\end{proof}

\subsection{Correctness}
For each $v \in V$ and $d > 0$, we define the following:

\begin{itemize}
  \item $B_d(v) = \{ u \in V : d(v, u) \le d \}$, i.e., it is the closed ball of radius $d$ centered at $v$. 
  \item $\ell(v) = \max_{u \in N_{\sqrt{n}}(v)} d(v, u)$, i.e., it is the smallest integer $\ell$ such that $|B_\ell(v)| \ge  \sqrt{n}$.
\end{itemize}

By the definition of $\ell(v)$, for any $\ell' < \ell(v)$, we have $|B_{\ell'}(v)| < \sqrt{n}$.
Our first goal is to show that each node $v$ computes the exact distance to each node $u$ such that $d(v,u) \leq (\ell(v)-1)/a$. %
We make the following claim.

\begin{claim} \label{claim_Bv}
    For each $v \in V$, we have $B_{(\ell(v)-1)/a}(v) \subseteq  \tilde N_{\sqrt n}(v)$.
\end{claim}

\begin{proof}
    For all $u \in B_{(\ell(v)-1)/a}(v)$, we have $d(v, u) \leq (\ell(v)-1)/a$, and hence $\delta(v, u) \leq \ell(v)-1$. In addition, by the definition of $\ell(v)$, there are less than $\sqrt{n}$ nodes at distance at most $\ell(v)-1$ from $v$, which implies that there are less than $\sqrt{n}$ nodes $u$ with $\delta(v,u) \leq \ell(v)-1$, as $d(v,u) \leq \delta(v,u)$. Hence, all nodes $u$ with $d(v,u) \leq (\ell(v)-1)/a$ are necessarily in the set $\tilde N_{\sqrt n}(v)$.
\end{proof}
  
We prove that each node $v$ learns the exact distance to all nodes in a sufficiently small set. Recall from the algorithm description that $d'(v, u)$ is the distance from $v$ to $u$ locally computed by node $v$.

\begin{lemma} \label{lemma_Bv}
For all $u  \in  B_{(\ell(v)-1)/a}(v)$, we have $d'(v, u)=d(v, u)$.
\end{lemma}
\begin{proof}
Assume, for the sake of contradiction, that the claim is false. Let $u \in B_{(\ell(v)-1)/a}(v)$ be a counterexample with the smallest value of $d(v, u)$.
By the algorithm description, $d'(v, u) \geq d(v, u)$, so we have $d'(v, u) > d(v, u)$. Let $t$ be the node right before $u$ on some shortest path from $v$ to $u$, so $t \in B_{(\ell(v)-1)/a}(v)$, and our choice of $u$ implies that
\[
d'(v, t) = d(v, t).
\]
Since $d'(v, u) > d(v, u)$, the edge $(t, u)$ on that shortest path must not have been sent to $v$ during the execution of the algorithm.
By \Cref{claim_Bv}, $t \in \tilde{N}_{\sqrt{n}}(v)$, so $t$ must have sent to $v$ its shortest $\sqrt{n}$ outgoing edges. Let the other endpoints of these edges be $u_1, u_2, \ldots, u_{\sqrt{n}}$.
By the choice of these edges, for each $i\in \{1, \ldots, \sqrt{n}\}$,
\[
d(t, u_i) \leq d(t, u).
\]
Therefore,
\[
d(v, u_i) \leq d(v, u),
\]
which implies $u_i \in B_{(\ell(v)-1)/a}(v)$. Hence, $B_{(\ell(v)-1)/a}(v)$ contains all nodes $u_1, u_2, \ldots, u_{\sqrt{n}}$ and also $u$, meaning
\[
|B_{(\ell(v)-1)/a}(v)| > \sqrt{n},
\]
contradicting the fact that $|B_{(\ell(v)-1)/a}(v)| < \sqrt{n}$, since by definition $B_{(\ell(v)-1)/a}(v) = B_{\ell'}(v)$ for some $\ell' < \ell(v)$.
\end{proof}

Our next goal is to prove that there is a low-hop path from each node $v$ to each node $u \in N_k(v)$. Our proof exploits the following property of $\ell(v)$.

\begin{claim} \label{claim_lv}
    For two nodes $u$ and $v$, it holds that $\ell(v)-\ell(u) \leq  d(v,u)$.
\end{claim}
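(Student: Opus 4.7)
The plan is to use the triangle inequality to relate balls around $u$ and balls around $v$, then exploit the minimality in the definition of $l(\cdot)$.

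First I would observe that for any node $w \in B_{l(u)}(u)$, the triangle inequality yields
\[ d(v, w) \le d(v, u) + d(u, w) \le d(u, v) + l(u), \]
so $w \in B_{l(u) + d(u, v)}(v)$. This gives the containment $B_{l(u)}(u) \subseteq B_{l(u) + d(u, v)}(v)$, and in particular
\[ \bigl|B_{l(u) + d(u, v)}(v)\bigr| \ge \bigl|B_{l(u)}(u)\bigr| \ge \sqrt{n}, \]
where the last inequality is by the definition of $l(u)$. Since $d(u, v)$ is a nonnegative integer (recall edge weights are positive integers), the quantity $l(u) + d(u, v)$ is an integer, so by the minimality of $l(v)$ we conclude $l(v) \le l(u) + d(u, v)$, i.e.\ $l(v) - l(u) \le d(u, v)$.

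By the symmetric argument, interchanging the roles of $u$ and $v$, we also get $l(u) - l(v) \le d(u, v)$. Combining the two inequalities gives $|l(v) - l(u)| \le d(u, v)$, which is the claim.

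There is no real obstacle — the only subtlety worth flagging is that one must know $d(u, v)$ is an integer (so that $l(u) + d(u, v)$ is a valid candidate for the integer $l$ in the definition of $l(v)$), which is guaranteed by the positive integer weight assumption stated at the beginning of \cref{sec:hopset}.
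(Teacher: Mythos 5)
Your proof is correct and uses essentially the same argument as the paper: the containment $B_{l(u)}(u) \subseteq B_{l(u)+d(u,v)}(v)$ via the triangle inequality, combined with the minimality in the definition of $l(\cdot)$ (the paper phrases this as a proof by contradiction after a WLOG assumption, while you argue directly and then invoke symmetry, but the substance is identical). Your remark about integrality of $d(u,v)$ is a fine extra check and does not change anything.
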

\begin{proof}
Suppose, for the sake of contradiction, that 
$\ell(u) + d(v,u) < \ell(v)$. 
On one hand, we have 
$|B_{\ell(u) + d(v,u)}(v)| < \sqrt{n}$,
since $\ell(u) + d(v,u) < \ell(v)$.
On the other hand, the ball $B_{\ell(u) + d(v,u)}(v)$ covers the set $B_{\ell(u)}(u)$, which has size at least $\sqrt{n}$. Hence
$|B_{\ell(u) + d(v,u)}(v)| \geq \sqrt{n}$,
which is a contradiction.
\end{proof}

To show that $H$ is indeed a $\sqrt{n}$-nearest $O(a \log{d})$-hopset, we prove something stronger: There is an $O(a \log d)$-hop path from $v$ to $u$ for any $u \in B_{\ell(v)}(v)$. Since there are at least $\sqrt{n}$ nodes at distance at most $\ell(v)$ from $v$, this implies that there is an $O(a \log d)$-hop path from $v$ to its $\sqrt{n}$-nearest nodes.

\begin{lemma}\label{lem:hopset_final}
	Let $v  \in  V$, and $u \in B_{\ell(v)}(v)$. 
 Then there is an $O(a \log d)$-hop path from $v$ to $u$ in $G \cup H$ with length $d(u, v)$.
\end{lemma}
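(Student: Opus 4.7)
The plan is to construct the path iteratively. Starting from $v_0 = v$, I would maintain the invariant that after $i$ iterations we are at a node $v_i$ lying on a shortest $v$--$u$ path and satisfying $u \in B_{l(v_i)}(v_i)$, i.e., $d(v_i, u) \leq l(v_i)$. In each iteration the path is extended by at most two edges of $G \cup H$ while making progress at least $(l(v_i) - 1)/a$ toward $u$; the invariant ensures we can keep reapplying \cref{lemma_Bv} at the current node.

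For the step, fix a shortest $v_i$--$u$ path $v_i = p_0, p_1, \ldots, p_k = u$. If $d(v_i, u) \leq (l(v_i)-1)/a$, then $u \in B(v_i)$ and \cref{lemma_Bv} guarantees that the hopset contains the edge $(v_i, u)$ with weight exactly $d(v_i, u)$, finishing with a single hop. Otherwise, let $j$ be the largest index with $d(v_i, p_j) \leq (l(v_i)-1)/a$. Then $p_j \in B(v_i)$, so by \cref{lemma_Bv} the hopset contains $(v_i, p_j)$ with weight $d(v_i, p_j)$; follow it with the graph edge $(p_j, p_{j+1})$ and set $v_{i+1} = p_{j+1}$ (skipping the hopset edge if $j = 0$). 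By construction $d(v_i, v_{i+1}) > (l(v_i)-1)/a$, and since $v_{i+1}$ lies on the original shortest path we still have $d(v_{i+1}, u) = d(v_i, u) - d(v_i, v_{i+1})$. Combined with \cref{claim_lv}, which gives $l(v_{i+1}) \geq l(v_i) - d(v_i, v_{i+1})$, this restores the invariant $d(v_{i+1}, u) \leq l(v_{i+1})$.

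It remains to bound the number of iterations. Writing $d_i = d(v_i, u)$, the invariant forces $l(v_i) \geq d_i$, and each iteration yields $d_{i+1} < d_i - (l(v_i)-1)/a \leq d_i - (d_i - 1)/a = (1 - 1/a)\, d_i + 1/a$, so $d_i - 1 \leq (d_0 - 1)(1 - 1/a)^i$. Taking $i = O(a \log d)$ drives $d_i$ below $2$, and since distances are positive integers a single original graph edge handles the residual case $d_i = 1$. Each iteration uses at most two hops, so the total hop count is $O(a \log d)$ as claimed; the total weight equals $d(u,v)$ by telescoping along the shortest path.

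The subtlety I would be most careful about is the regime $l(v_i) \leq a$, where $(l(v_i)-1)/a < 1$ and the hopset offers no quantitative improvement over a single graph edge. In that regime, the strict inequality $d(v_i, v_{i+1}) > (l(v_i)-1)/a$ together with integrality still guarantees progress of at least one unit per iteration, so the geometric recurrence above continues to apply; matching this tail gracefully into the $O(a \log d)$ bound, rather than paying an extra additive $O(a)$ term, is the one place where the accounting needs a careful look.
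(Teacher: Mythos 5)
Your proof is correct and follows essentially the same route as the paper's: walk along a shortest path, repeatedly jump via a hopset edge to the farthest path node still inside $B(\cdot)$ of the current node plus one graph edge, and use \cref{claim_lv} together with \cref{lemma_Bv} to show each two-hop step covers roughly a $1/a$ fraction of the remaining distance. The only differences are cosmetic (you maintain the invariant $d(v_i,u)\le l(v_i)$ directly at the jump nodes and run the recurrence on $d_i-1$, whereas the paper proves $l(s_i)\ge d(s_i,u)$ for all path nodes and uses integrality), so no changes are needed.
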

\begin{proof}
Fix a shortest path $v=s_1 \rightarrow s_2 \rightarrow \cdots \rightarrow s_k=u$ in $G$, where $u  \in  B_{\ell(v)}(v)$. 

Let $t_{0}=s_{1}$. For each $i \in \mathbb {Z}^{+}$, let $t_{i}$ be the smallest-index node after $t_{i-1}$ on the path that does not belong to $B_{(\ell(t_{i-1})-1)/a}(t_{i-1})$. We terminate when $u \in B_{(\ell(t_i)-1)/a}(t_i)$. 
See \Cref{figure_ti} for illustration.

\begin{center}
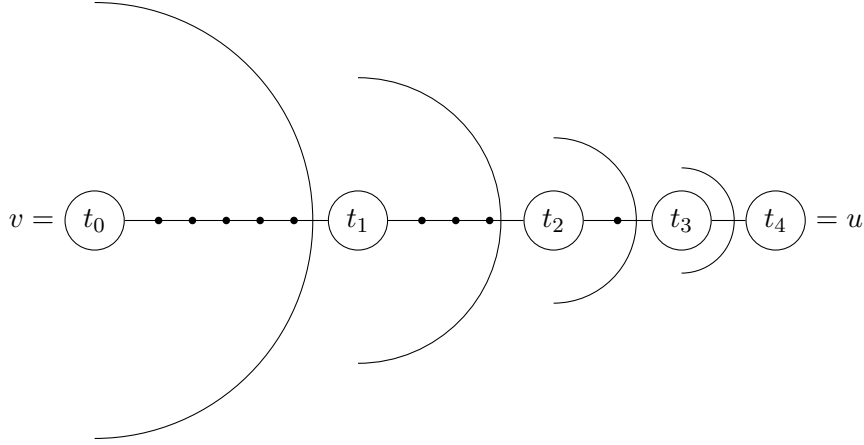

\begin{tikzpicture}[
	bignode/.style={circle, draw, minimum size=0.7cm},
	smallnode/.style={circle, fill, minimum size=0.1cm, inner sep=0pt},
	]

	\node [bignode](t0) {$t_0$};
	\path (t0.west) node [anchor=east] {$v=$};

	\draw (t0.east)
		-- ++(0.45, 0) node [smallnode] {}
		-- ++(0.45, 0) node [smallnode] {}
		-- ++(0.45, 0) node [smallnode] {}
		-- ++(0.45, 0) node [smallnode] {}
		-- ++(0.45, 0) node [smallnode] {}
		-- ++(0.45, 0) node [anchor=west, bignode](t1) {$t_1$}

		(t1.east)
		-- ++(0.45, 0) node [smallnode] {}
		-- ++(0.45, 0) node [smallnode] {}
		-- ++(0.45, 0) node [smallnode] {}
		-- ++(0.45, 0) node [anchor=west, bignode](t2) {$t_2$}

		(t2.east)
		-- ++(0.45, 0) node [smallnode] {}
		-- ++(0.45, 0) node [anchor=west, bignode](t3) {$t_3$}

		(t3.east)
		-- ++(0.45, 0) node [anchor=west, bignode](t4) {$t_4$}
		;
	
    \draw (t0) ++(270:2.9cm) arc (270:450:2.9cm);
    \draw (t1) ++(270:1.9cm) arc (270:450:1.9cm);
    \draw (t2) ++(270:1.1cm) arc (270:450:1.1cm);
    \draw (t3) ++(270:0.7cm) arc (270:450:0.7cm);
	\path (t4.east) node [anchor=west] {$=u$};
\end{tikzpicture}
\captionof{figure}{Illustration of the selection of $t_i$.}\label{figure_ti}
\end{center}

By an induction on $i$, the following can be proved:
\begin{itemize}
  \item We claim that $\ell(s_{i})\geq d(s_{i},u)$. For $s_1=v$ this clearly holds, as $u \in B_{\ell(v)}(v)$. For a general $i$, by \Cref{claim_lv}, we have $\ell(s_{i})\geq \ell(s_{1})-d(s_{1},s_{i})\geq d(s_{1},u)-d(s_{1},s_{i})=d(s_{i},u).$
  \item Our choice of $t_{i+1}$ guarantees $t_{i+1}\notin B_{(\ell(t_i)-1)/a}(t_i)$, which implies $d(t_{i+1},u)\leq d(t_{i},u)-\frac{\ell(t_{i})}{a}\leq d(t_{i},u)-\frac{d(t_{i},u)}{a}=(1-\frac{1}{a})\cdot d(t_{i},u)$.
\end{itemize}

Let $i^\ast$ be the final index of the sequence $(t_0, t_1, t_2, \ldots)$. We claim that $i^\ast \leq \lceil a \ln d \rceil +1$, since otherwise \[d(t_{\lceil a \ln d \rceil+1},u) \leq d(v,u)\cdot \left(1-\frac{1}{a}\right)^{\lceil a \ln d \rceil +1}  < d(v,u) \cdot \frac{1}{d} \leq 1,\]
meaning that $d(t_{\lceil a \ln d \rceil+1},u) = 0$ and $t_{\lceil a \ln d \rceil+1} = u$.

By \Cref{lemma_Bv}, our choice of the sequence $(t_0, t_1, t_2, \ldots)$ implies that $t_{i+1}$ can be reached from $t_i$ via a 2-hop path of length $d(t_{i-1},t_i)$ in $G \cup H$. Similarly,  if $u \neq t_{i^\ast}$, then $u$ can be reached from $t_{i^\ast}$ via one edge of weight $d(t_{i^\ast}, u)$ in $G \cup H$. Therefore, there is an $O(a \log{d})$-hop path from $v$ to $u$ with length $d(v,u)$ in $G \cup H$, as needed.
\end{proof}

We summarize the discussion as follows.

\begin{proof}[Proof of \Cref{approximate-to-hop}]
By \Cref{claim:round_complexity}, the algorithm finishes in $O(1)$ rounds.  By the algorithm description, the weight of every edge $e=(v,u)$ added to $H$ equals the length of a path from $v$ to $u$ in the input graph $G$. Therefore, the distances are the same in both $G$ and $G \cup H$.
Consider any two nodes $v \in V$ and $u \in N_{\sqrt{n}}(v) \subseteq B_{\ell(v)}(v)$.
By \Cref{lem:hopset_final}, there is a path from $v$ to $u$ in $G \cup H$ with length $d_{G}(v,u)$, so $H$ is indeed a $\sqrt{n}$-nearest $O(a \log d)$-hopset.
\end{proof}

\section{Fast Computation of the \texorpdfstring{$k$}{k}-Nearest Nodes} \label{sec:knearest}

In \Cref{sec:hopset}, we show that it is possible to construct a hopset such that, for every node $v$, its $k$-nearest nodes $N_k(v)$ are reachable within $O(a \log d)$ hops. Assuming that the edge weights are polynomially bounded and that an $a \in O(\log n)$-approximation of APSP is given, the resulting hopset achieves a hop bound of $O(\log^2 n)$.

Our next goal is to use this hopset to compute the distances to the $k$-nearest nodes. A naive approach would require $O(\log^2 n)$ rounds, as the hopset guarantees paths with $O(\log^2 n)$ hops. By applying fast matrix exponentiation, following the approach of~\cite{censorhillel2019fast}, the computation can be done in $O(\log(\log^2 n)) = O(\log \log n)$ rounds. The high-level idea is to compute, in round $i$, the $k$-nearest nodes with respect to the $2^i$-hop distances. 

Nevertheless, even this improved approach remains too costly for our purposes. To obtain a more efficient algorithm, we identify specific conditions under which the distances to the $k$-nearest nodes can be computed in just $O(1)$ rounds. We begin by proving the following lemma.

\begin{restatable}{lemma}{knearest}\label{lemma:fast-matrix-exp-one-iter}
Let $h$ and $k$ be positive integers satisfying $k \in O(n^{1/h})$. In a weighted directed graph $G$, we can deterministically compute, for each node $u$, the $h$-hop distances to the nodes $N_k^h(u)$, in $O(1)$ rounds.
\end{restatable}

The problem can be phrased as a special case of \emph{filtered matrix multiplication}. Here, we are given the weighted adjacency matrix $A$ of the input graph $G$, and our goal is to compute for each node $u$, the $k$ smallest elements, breaking ties by node $\ID$s, in the row of $u$ in the matrix $A^{h}$, where $A^{h}[u,v]$ equals the $h$-hop distance from $u$ to $v$. 

To obtain an efficient algorithm, we start by filtering the matrix $A$ to a matrix $\overline{A}$ that only keeps the $k$ smallest elements in each row, breaking ties by node $\ID$s, and then we compute $\overline{A}^{h}$. We will later show that  $\overline{A}^{h} = \overline{A^h}$.

Filtered matrix multiplication was also used before to find the $k$-nearest nodes in \cite{censorhillel2019fast,dory2022exponentially}. They showed how to compute filtered matrix multiplication of two matrices in $O(1)$ rounds when $k \leq \sqrt{n}$. They also showed an algorithm that works when $k \leq n^{2/3}$, but it comes at a cost of an additional $O(\log{n})$ term in the round complexity. To use their approach to compute the $k$-nearest nodes with respect to $h$-hop distances, one needs to repeat the algorithm $\log{h}$ times, which is costly.  

In this work, we give a much faster algorithm. Concretely, we show that $\overline{A}^{h} = \overline{A^h}$ can be computed in $O(1)$ rounds when $k \in O(n^{1/h})$.
Repeating this $i$ times yields the following result.

\begin{restatable}{lemma}{inearest}
	\label{lemma:fast-matrix-exp-i-iter}
    Let $h$, $k$, and $i$ be positive integers satisfying $k \in O(n^{1/h})$. In a weighted directed graph $G$, we can deterministically compute, for each node $u$, the $h^i$-hop distances to the nodes $N_k^{h^i}(u)$, in $O(i)$ rounds.
\end{restatable}

As a direct corollary of \Cref{lemma:fast-matrix-exp-i-iter}, if we have a $k$-nearest hopset that guarantees low-hop paths from each node to its $k$-nearest nodes, then we can compute the exact distances to these nodes using \Cref{lemma:fast-matrix-exp-i-iter}.

\matrixexp*
\begin{proof}
    It follows by applying \Cref{lemma:fast-matrix-exp-i-iter} to $G \cup H$.
\end{proof}

Similar to \Cref{approximate-to-hop}, although this paper focuses on APSP in \emph{undirected} graphs, the result in \Cref{fast-matrix-exp} also holds for \emph{directed} graphs. Throughout this section, we assume that the input graph $G$ is directed.

\subsection{Overview and Comparison With Existing Algorithms}

Our approach is inspired by previous algorithms for neighborhood collection~\cite{chechik2022constant} and path listing~\cite{CensorHillel2021FastDA} in sparse graphs. Specifically, the algorithm in~\cite{chechik2022constant} targets low-degree graphs, while the one in~\cite{CensorHillel2021FastDA} assumes the absence of small cycles. In contrast, our input graph is not necessarily sparse, so we cannot collect full neighborhoods as in those works. Instead, we leverage the fact that we are only interested in a sparse part of the output---namely, the $k$-nearest nodes---to design an efficient algorithm.

Our algorithm is a derandomization of the neighborhood collection algorithm in~\cite{chechik2022constant} using the bin-splitting technique from~\cite[Theorem 8]{CensorHillel2021FastDA}, which resembles the color-coding method of Alon, Yuster, and Zwick~\cite{alon1995color}. In addition, we use a more careful selection of edges to transmit and a tighter analysis to handle cases where not all nodes have low degree.

Compared to the path listing algorithm in~\cite[Theorem 8]{CensorHillel2021FastDA}, which enumerates all paths of $O(k)$ hops when the graph has at most $O(n^{1+1/k})$ edges, the first phase of our algorithm is identical. However, in our case, we additionally require the nodes to learn the path lengths. Specifically, if $v \in N_k^h(u)$, then node $u$ must learn the pair $(v, \min \ell_P)$, where $\ell_P$ is the length of a path $P$ from $u$ to $v$ with at most $h$ hops, and the minimum is taken over all such paths.

In the setting of~\cite{CensorHillel2021FastDA}, it is guaranteed that each node $u$ is the start of only $O(n)$ such paths $P$, so $u$ receives at most $O(n)$ messages of the form $(v, \ell_P)$. This allows for a direct application of \Cref{lemma:improved-routing}. In our setting, however, this guarantee no longer holds. To address this, the key idea is to sort the edge list in a particular order. We describe the algorithm below.

\subsection{Algorithm Description}
\label{subsec:fast-matrix-exp-algorithm}

	We now describe the algorithm used to prove \Cref{lemma:fast-matrix-exp-one-iter}.

	\begin{enumerate}
		\item \label{step_filter}
        Each node $u$ computes a list $M_{(u)}$ consisting of $k$ triplets $(u, v, w_{uv})$, where $v \in N_k^1(u)$ and $w_{uv}$ is the weight of the \emph{directed} edge $(u, v)$. That is, $u$ selects the $k$ nodes $v$ with the smallest edge weights $w_{uv}$. Each triplet represents an edge and its weight. We will later show that this information is sufficient for each node to reach its $k$-nearest nodes.

\item  \label{step_bins}
Let $M = M_{(1)} \circ M_{(2)} \circ \cdots \circ M_{(n)}$ be the \emph{ordered} list formed by concatenating the lists $M_{(v)}$ for all nodes $v$, in increasing order of their $\ID$s. Here, we slightly abuse notation by writing $M_{(\ID(v))} = M_{(v)}$. The list $M$ contains $nk \in O(n \cdot n^{1/h})$ elements.

We divide $M$ into $p = \left\lfloor n^{1/h} \cdot \frac{h}{4} \right\rfloor$ contiguous sublists, each containing $O(n/h)$ edges. Each sublist is called a \emph{bin}, and we denote the ordered sequence of bins by $C_1, \dots, C_p$.

We consider all $h \cdot \binom{p}{h}$ ways of selecting $h$ distinct bins $C_{i_1}, C_{i_2}, \dots, C_{i_h}$, where the ordering of all bins \emph{except the first one}---i.e., $C_{i_2}, \dots, C_{i_h}$---is irrelevant. Each such selection is referred to as an \emph{$h$-combination}. We will later show that the number of $h$-combinations is at most $n$.

\item We assign each $h$-combination to a distinct node and have each such node learn all the triplets contained in the bins of its assigned $h$-combination. \label{step_assign}

\item \label{step_end}
Each node $u$ identifies all nodes $v$ such that the first bin $C_{i_1}$ in the $h$-combination assigned to $v$ contains at least one edge from $M_{(u)}$. Then, $u$ queries each such node $v$ for the $k$ nodes that are nearest to $u$, breaking ties by node $\ID$s, computed using the edges received by $v$ along paths of at most $h$ hops, as well as their corresponding $h$-hop distances. 
\end{enumerate}

\paragraph{Assumptions.} For the rest of the discussion, we make the following two assumptions.
\begin{enumerate}
    \item $p \geq h$, so it is possible to select $h$ distinct bins out of $p$ bins.
    \item The bin size $nk/p \in O(n/h)$ is larger than the local list size $k$.
\end{enumerate}
To justify the two requirements, we show that the problem becomes trivial if any one of the two requirements is not met. If $p < h$, then $k \in O(n^{1/h})\subseteq O(1)$, so we can afford to let all nodes broadcast their $k$ shortest outgoing edges in $O(1)$ rounds. This provides enough information to achieve the goal of \Cref{lemma:fast-matrix-exp-one-iter}. If the bin size $nk/p \in O(n/h)$ at most the local list size $k$, then $n \leq p = \left\lfloor n^{1/h} \cdot \frac{h}{4} \right\rfloor$. Since $h$ is a positive integer, this is only possible when $k \in O(n^{1/h})\subseteq O(1)$. As discussed earlier, in this case, the goal of \Cref{lemma:fast-matrix-exp-one-iter} can be achieved in $O(1)$ rounds.

\subsection{Complexity}

We demonstrate a constant-round implementation of the above algorithm. The high-level idea is as follows. Given the values of $h$ and $p$, each node can locally compute the assignment of the $h$-combinations using a deterministic procedure.
In Step~\ref{step_assign}, each node needs to learn $O(n)$ words, each of $O(\log n)$ bits. Hence we can apply \Cref{lemma:improved-routing} to perform the communication in $O(1)$ rounds.
Similarly, in Step~\ref{step_end}, each node $u$ requests $O(k)$ words of information from $O(n/p)$ nodes $v$. Since $k \in O(n^{1/h})$ and $p = \lfloor n^{1/h} \cdot \frac{h}{4} \rfloor$, the total number of words received by any node remains $O(n)$. Therefore, we can again apply \Cref{lemma:improved-routing} to complete this step in $O(1)$ rounds.

\begin{lemma} \label{complexity_knearest}
	The above algorithm can be executed in $O(1)$ rounds.
\end{lemma}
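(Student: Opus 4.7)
The plan is to verify that each of the five steps in \cref{subsec:fast-matrix-exp-algorithm} requires only $O(1)$ rounds. Steps 1–3, in which each node computes its list $M_{(v)}$, filters to the $k$ lightest outgoing edges, and conceptually forms $M$ together with its bin partition, are purely local once the prefix sums $|M_{(1)}|, \dots, |M_{(n)}|$ are known to everyone; since each entry is $O(\log n)$ bits, having every $v$ broadcast $|M_{(v)}|$ costs a single round. Only Steps 4 and 5 require substantive routing, and I plan to handle both by invoking \cref{lemma:improved-routing} after verifying its hypotheses.

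For Step 4, I first need to confirm that the assignment is well defined, which reduces to showing that the number of combinations is at most $n$. Counting directly gives $p \cdot \binom{p-1}{h-1} \le p^{h}/(h-1)!$, and substituting $p \le n^{1/h} \cdot h/4$ bounds this by $n \cdot (h/4)^{h}/(h-1)!$. So I need the elementary inequality $(h/4)^{h} \le (h-1)!$ for every integer $h \ge 2$, which is immediate for small $h$ by inspection and follows for large $h$ from Stirling's approximation since the ratio decays like $(e/4)^{h-1}$ up to polynomial factors. Once each combination sits on its own node, that node must learn $h$ bins of $|M|/p = \Theta(n/h)$ edges each, i.e.\ $O(n)$ edge messages in total, while each source holds only $O(n^{1/h} \log n) \subseteq O(n \log n)$ bits of filtered input. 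The hypotheses of \cref{lemma:improved-routing} are therefore satisfied, and Step 4 finishes in $O(1)$ rounds.

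For Step 5, the crucial observation is that $M_{(u)}$ is a \emph{contiguous} block of $M$ of size $\le k$, whereas every bin has size $\Theta(n/h)$. Because $k / (n/h) = h \cdot n^{1/h - 1} = o(1)$ for $h \ge 2$ and $k \in O(n^{1/h})$, the edges of $M_{(u)}$ intersect at most two consecutive bins. Consequently $u$ queries only $O(1) \cdot \binom{p-1}{h-1} \le O(n/p) = O(n^{1-1/h}/h)$ distinct nodes $v$, each of which replies with $O(k) = O(n^{1/h})$ node–distance pairs. The total number of messages received by any $u$ is thus $O(n/h) \subseteq O(n)$, and every responder $v$ computes its answers from only the $O(n \log n)$ bits stored in its combination, so \cref{lemma:improved-routing} once again delivers the routing in $O(1)$ rounds.

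The obstacle I anticipate is the contiguity argument in Step 5. Had $M$ been ordered arbitrarily, $M_{(u)}$ could scatter across all $p$ bins and $u$ would need to query $k \cdot \binom{p-1}{h-1} = \Theta(n^{1+1/h}/h)$ nodes, violating the per-target bound of \cref{lemma:improved-routing}. Ordering $M$ by source ID — together with the specific choice $p = \lfloor n^{1/h} \cdot h/4 \rfloor$, which simultaneously caps the number of combinations at $n$ and pins each bin at $\Theta(n/h)$ edges — is exactly what forces $M_{(u)}$ into $O(1)$ bins and makes the $O(1)$-round bound attainable.
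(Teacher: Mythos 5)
Your proposal is correct and follows essentially the same route as the paper's proof: broadcast the list sizes so the bin structure and assignment are globally known, bound the number of combinations by $n$ (your $p\binom{p-1}{h-1}\le p^h/(h-1)!$ with $(h/4)^h\le (h-1)!$ is just an equivalent variant of the paper's $h\binom{p}{h}\le n\cdot h(e/4)^h\le n$), and apply \cref{lemma:improved-routing} to Steps 4 and 5 after checking the $O(n)$ per-target bound, using the contiguity of $M_{(u)}$ versus the $\Theta(n/h)$ bin width to get $|S|\le 2\binom{p-1}{h-1}\le O(n/p)$ and hence $O(n/p)\cdot k\subseteq O(n)$ received words.
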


\begin{proof} We now describe how the algorithm can be efficiently implemented in the \clique model. Step~\ref{step_filter} requires no communication. Now we consider Step~\ref{step_bins}. First of all, the number $h \cdot \binom{p}{h}$  of $h$-combinations is at most $n$ because
\[h \cdot \binom{p}{h} \leq  h \cdot \left(\frac{pe}{h}\right)^h \leq  h \cdot \left(n^{1/h} \cdot \frac{e}{4}\right)^h = n \cdot h \cdot \left(\frac{e}{4}\right)^h.\]
For all $h \geq 1$, $h \cdot \left(\frac{e}{4}\right)^h \leq  1$ because $\ln\left(h \cdot \left(\frac{e}{4}\right)^h \right) = \ln h + h \ln \frac{e}{4}$ has derivative $\frac{1}{h} + \ln \frac{e}{4}$. For $h \geq 3$, the derivative is negative, thus the function is decreasing on $h  \in [3, \infty )$, and it can be manually checked that $h \cdot \left(\frac{e}{4}\right)^h \leq  1$ for $h \in [1,3]$.

The list $M$ is not known to any individual node. However, since each local list $M_{(v)}$ has size exactly $k$, every node can use this information to determine, for any element $M_j$ of the global list $M$, which node owns it and its position within that node\IeC {\textquoteright }s local list. Specifically, given an index $j$, each node can compute the corresponding node $v$ and position $\ell$ such that $M_j$ is the $\ell$th element of $M_{(v)}$.

Given the values of $h$ and $p$, the nodes can use some deterministic algorithm to locally compute how the $h$-combinations should be assigned to the nodes, thus each node knows the assignment of $h$-combinations to all nodes.

\paragraph{Implementation of Step~\ref{step_assign}.} Consider any node $u$, which wants to learn the contents of bins $C_{i_1}, C_{i_2}, \dots, C_{i_h}$. As discussed earlier, node $u$ can locally compute the location of the edges that it needs to learn in each local list $M_{(v)}$.
Moreover, as the bin size $O(n/h)$ is  larger than the local list size $k$,  these edges must correspond to a contiguous segment of $M_{(v)}$. Let the leftmost index be $l_{uv}$ and the rightmost index be $r_{uv}$. 

For any two nodes $u$ and $v$, let $u$ send the values $l_{uv}$ and $r_{uv}$ to $v$. This allows each node to know exactly which portion of its list to send and to whom.

Each bin has $O(n \cdot n^{1/h} /p) \subseteq  O(n/h)$ edges, and each node
 has to learn $h$ bins, so each node need to receive $O(n)$ messages of $O(\log n)$ bits, so the routing algorithm of \Cref{lemma:improved-routing} can be applied to solve the routing instance in $O(1)$ rounds.

\paragraph{Implementation of Step~\ref{step_end}.} Let $S$ be the set of nodes $v$ such that the bin $C_{i_1}$ assigned to $v$ contains at least one edge from $M_{(u)}$. Recall that node $u$ knows the $h$-combination assignments for all nodes, as well as the contiguous sublist of $M$ corresponding to the edges in $M_{(u)}$. Using this information, it can compute the set $S$ locally. In this step, node $u$ needs to request the $k$ nodes nearest to $u$ within $h$ hops computed by each node $v \in S$.

Recall that there are a total of $h \cdot \binom{p}{h} \leq n$ possible $h$-combinations.
By symmetry, for each $1 \leq i \leq p$, the fraction of $h$-combinations where $i_1 = i$ is $1/p$.
Therefore, the number of such $h$-combinations is at most $n/p$.
Furthermore, since each bin is much larger than any local list, at most two bins can contain elements from $M_{(u)}$.
Hence the size of the set $S$ satisfies $|S| \leq 2n/p \in O(n/p)$.

Since $k \in O(n^{1/h})$ and $p = \left\lfloor n^{1/h} \cdot \frac{h}{4} \right\rfloor$, we have
\[
|S| \cdot k \in O(n/p) \cdot k \subseteq O(n).
\]
Therefore, using the routing algorithm of \Cref{lemma:improved-routing}, each node can obtain all the relevant information in $O(1)$ rounds.
\end{proof}

\subsection{Correctness} %

To prove \Cref{lemma:fast-matrix-exp-one-iter}, our goal is to show that each node $u$ learns the $h$-hop distances to its $k$-nearest nodes $N_k^h(u)$ with respect to $h$-hop distances. Intuitively, this holds because each such $h$-hop path is known to some node $v$, which will relay the necessary information to $u$. We show that filtering the edges to retain only the $k$ smallest-weight edges per node does not eliminate any of these paths. \Cref{lemma:fast-matrix-exp-i-iter} then follows by applying \Cref{lemma:fast-matrix-exp-one-iter} iteratively $i$ times.

Framing this as a matrix problem, each node $u$ needs to learn the $k$ smallest entries in row $u$ of the matrix $A^{h}$, where $A^{h}[u,v]$ represents the $h$-hop distance from $u$ to $v$. We begin by showing that $u$ can obtain the $k$ smallest entries in row $u$ of the matrix $\overline{A}^{h}$, where $\overline{A}$ is derived from $A$ by retaining only the $k$ smallest entries in each row, with ties broken by node $\ID$s, and setting all other entries to $\infty$.

Observe that $\overline{A}$ is the weighted adjacency matrix of the \emph{directed} graph defined by the local edge lists constructed in Step~\ref{step_filter} of the algorithm. Even when the initial graph $G$ is undirected, the graph becomes directed after filtering in the sense that an edge $e = \{u, v\}$ may be retained by one endpoint $u$ but discarded by the other endpoint $v$. In such a case, $e$ is treated as a directed edge $(u, v)$, which appears in row $u$ but not in row $v$ of $\overline{A}$.

\begin{lemma} \label{lemma_Abar}
    By the end of the algorithm, each node $u$ knows the distances to the $k$-nearest nodes according to $\overline{A}^{h}$. That is, $u$ learns the $k$ pairs $\left(w, \overline{A}^{h}[u,w]\right)$ corresponding to the $k$ smallest elements in the row of $u$ in $\overline{A}^{h}$, breaking ties by node $\ID$s.
\end{lemma}

\begin{proof}
By definition, $\overline{A}^{h}[u, w]$ is the minimum length of a path $P$ with at most $h$ hops from $u$ to $w$ in $\overline{A}$. Since $\overline{A}$ is the weighted adjacency matrix obtained after filtering  (Step~\ref{step_filter}), each edge of such a path $P$ belongs to some bin, and there exists an $h$-combination $(C_{i_1}, C_{i_2}, \dots, C_{i_h})$ such that each edge of $P$ lies in some bin in the $h$-combination, with the first edge in $C_{i_1}$. Thus, the entire path $P$ is known to the node $v$ assigned to the $h$-combination $(C_{i_1}, C_{i_2}, \dots, C_{i_h})$.

In Step~\ref{step_end} of the algorithm, node $v$ sends to node $u$ the $k$ nodes that are closest to $u$\IeC {\textemdash }with ties broken by node $\ID$s\IeC {\textemdash }along with their corresponding $h$-hop distances, as computed from the edges received by $v$ along paths of at most $h$ hops.

If $\overline{A}^{h}[u, w]$ is among the $k$ smallest entries in row $u$ of $\overline{A}^{h}$, with ties broken by node $\ID$s, then node $v$ must send the pair $\left(w, \overline{A}^{h}[u, w]\right)$ to $u$, since otherwise the $k$ nodes reported by $v$ would all be strictly preferred over $w$ in the selection of the $k$ smallest entries of row $u$, which contradicts the assumption that $\overline{A}^{h}[u, w]$ is among the $k$ smallest entries in row $u$ of $\overline{A}^{h}$.
\end{proof}

We next show that learning the $k$ smallest elements of each row of according to $\overline{A}^h$ is equivalent to learning the $k$ smallest elements of each row of $A^h$. The following lemma states a property of filtered matrix multiplication that holds for all $i \geq 1$, and is independent of the algorithm being used.

\begin{lemma} \label{cor_filter}
For any integer $i \geq 1$, $\overline{\overline{A}^i} = \overline{A^i}$.
\end{lemma}
\begin{proof}
To prove the lemma, it suffices to show that for each $v \in N^i_k(u)$, any minimum-length path $P$ from $u$ to $v$ among all $u$-$v$ paths with at most $i$ hops in the original graph (whose weighted adjacency matrix is $A$) is also preserved in the graph resulting from keeping only the $k$ smallest-weight outgoing edges for each node (whose weighted adjacency matrix is $\overline{A}$).

Let $d_P(u, w)$ be the distance from $u$ to a node $w$ along the path $P$. Consider the node $w$ right before $v$ in $P$, and examine the subpath from $u$ to $w$. Since all edge weights are positive, we have $d_P(u, x) < d_P(u, v)$ for every node $x$ on this subpath.

Suppose, for contradiction, that some edge on this subpath is removed in $\overline{A}$. Let $(x, y)$ be the \emph{first} such edge in $P$. Since $(x, y)$ is filtered out, node $x$ must have retained $k$ edges $(x,z)$ such that $w_{xz} \leq w_{xy}$. For each such $z$, the path from $u$ to $z$ via $x$ has length at most 
\[
d_P(u, x) + w_{xy} = d_P(u, y) < d_P(u, v),
\]
and uses fewer than $i$ hops. This contradicts the assumption that $v$ is among the $k$-nearest nodes to $u$ using at most $i$ hops.

Thus, the entire subpath from $u$ to $w$ is preserved. A similar argument applies to the final edge $(w, v)$. If $(w, v)$ was removed, then $w$ would have retained $k$ neighbors strictly closer than $v$, breaking ties by node $\ID$s, again contradicting the assumption that $v \in N^i_k(u)$.

Therefore, the path $P$ from $u$ to $v$ is fully preserved in $\overline{A}$. Since this holds for every $v \in N^i_k(u)$, the $k$ smallest entries in row $u$ of $\overline{A}^i$ are identical to those in $A^i$. Applying this argument to all nodes $u$, we conclude that $\overline{\overline{A}^i} = \overline{A^i}$, completing the proof.
\end{proof}

We can now prove \Cref{lemma:fast-matrix-exp-one-iter}.

\knearest*

\begin{proof}
   By \Cref{lemma_Abar}, by the end of the algorithm, each node $u$ knows the distances to the $k$-nearest nodes according to the matrix $\overline{A}^{h}$. Hence, by \Cref{cor_filter}, each node $u$ knows the distances to the $k$-nearest nodes according to $A^h$, which means that $u$ knows the $h$-hop distances to the $k$-nearest nodes $N_k^h(u)$ with respect to $h$-hop distances.
 The number of rounds is $O(1)$ by \Cref{complexity_knearest}. 
\end{proof}

By repeating the algorithm $i$ times, we prove \Cref{lemma:fast-matrix-exp-i-iter}.

\inearest*

\begin{proof}
    Let $A$ be the weighted adjacency matrix of $G$. From \Cref{lemma:fast-matrix-exp-one-iter}, if we apply the algorithm once, then each node $u$ learns the distances to the $k$-nearest nodes according to $A^h$. We denote the resulting matrix by $A_1 = \overline{A^h}$. We can now run the algorithm again with the input matrix $A_1$, and so on. We prove that after $i$ iterations, each node $u$ knows the distances to the $k$-nearest nodes according to $A^{h^i}$, which are exactly the $h^i$-hop distances to the nodes $N_k^{h_i}(u)$. 
    
    The proof is by an induction on $i$. For $i=1$, it holds by \Cref{lemma:fast-matrix-exp-one-iter}. Assume that it holds for $i$, and we prove that it holds for $i+1$. This means that after $i$ iterations, each node knows the distances to the $k$-nearest nodes according to $A^{h^i}$. We write $A_i= \overline{A^{h^i}}$ to denote the result of filtering $A^{h^i}$. We now run the algorithm on $A_i$. By \Cref{lemma:fast-matrix-exp-one-iter}, we know that after running the algorithm, each node knows distances to the $k$-nearest nodes according to $\left(A_i\right)^h = \left(\overline{A^{h^i}}\right)^h$. By \Cref{cor_filter}, $\overline{\left(\overline{A^{h^i}}\right)^h} = \overline{A^{h^{i+1}}}$, so   
    this is equivalent to knowing the  distances to the $k$-nearest nodes according to $A^{h^{i+1}}$, as needed. The number of rounds is $O(i)$, as each iteration takes $O(1)$ rounds. 
\end{proof}

\section{Skeleton Graphs}
\label{sec:skeleton}

After computing distances to the $k$-nearest nodes, our next goal is to extend it to approximate distances between all pairs of nodes. 
Informally, the following lemma states that once each node knows its approximate $k$-nearest nodes, we can construct a graph $G_S$ with $\tilde{O}(\frac{n}{k})$ nodes, such that approximating APSP on $G$ reduces to approximating APSP on $G_S$. For technical reasons, during our algorithms sometimes nodes only know the distances to an approximate set of the $k$-nearest nodes. To handle this, we require certain conditions on the sets known, which makes the full description of the lemma a bit more technical. 

\begin{lemma}[Skeleton graphs -- full version]\label{skeleton} Let \( a \geq 1 \) be a real number, and let \( k \geq 1 \) be an integer. Let \( \delta \) be a symmetric function that takes any two nodes \( u \) and \( v \) in a weighted undirected graph \( G \) as input and returns an integer. Suppose that for each node \( u\) in $G$, a set \( \tilde{N}_k(u) \) of \( k \) nodes is given, satisfying the following conditions:
\begin{itemize}
    \item For each node \( v \in \tilde{N}_k(u) \), the value \( \delta(u, v) \) is known to \( u \), and satisfies
    \[
    d(u,v) \leq \delta(u, v) \leq a \cdot d(u, v).
    \]
    \item For any \( v \in \tilde{N}_k(u) \) and \( t \notin \tilde{N}_k(u) \), we have
    \[
    \delta(u, v) \leq a \cdot d(u, t).
    \]
\end{itemize}
Then, w.h.p., we can construct a weighted undirected graph \( G_S \) over a subset \( V_S \subseteq V \) of size \( O\left( \frac{n \log k}{k} \right) \) in \( O(1) \) rounds. Moreover, given any \( l \)-approximation of APSP on \( G_S \), we can deterministically compute a \( 7 l a^2 \)-approximation of APSP on \( G \) in \( O(1) \) rounds.
\end{lemma}

Informally, the set $\tilde{N}_k(u)$ is understood as an $a$-approximate $k$-nearest set of nodes for $u$, and $\delta $ is a \emph{local} $a$-approximation of APSP on $G$ in the sense that the $a$-approximation guarantee only applies to the approximate $k$-nearest sets $\tilde N_k(u)$. Unlike the $k$-nearest set $N_k(u)$ which is unambiguously defined, $\tilde N_k(u)$ can be any set held by node $u$ that satisfies the conditions listed in the lemma statement, and such a set is not necessarily unique.

Essentially, this lemma allows us to extend a local $a$-approximation $\delta $ over the approximate $k$-nearest sets to a $7la^2$-approximation of APSP on the whole of $G$, losing a multiplicative factor of $7la$ in the approximation.
The following lemma is a special case of \Cref{skeleton}.

\skel*
\begin{proof}
It follows from \Cref{skeleton} by setting  $a=1$, $\delta (u, v)=d(u, v)$, and $\tilde{N}_k(u) = N_k(u)$.
\end{proof}

For all applications of skeleton graphs in this work, except for the proof of \Cref{thm:special-case-2-approximate-APSP,thm:special-case-2-approximate-APSP-truncated}, each node $u$ already knows its exact set $N_k(u)$ of its $k$-nearest nodes, so the above simplified version suffices.

\subsection{Algorithm Description} \label{sec:alg_skeleton}
We describe the algorithm for \Cref{skeleton}.

\begin{enumerate}
	\item First, we construct a hitting set \( S \subseteq V \) of size \( O\left( \frac{n \log k}{k} \right) \) such that for every node \( v \), we have \( S \cap \tilde{N}_k(v) \neq \emptyset \). This can be done in \( O(1) \) rounds w.h.p. The nodes in \( S \) are referred to as the \emph{skeleton nodes}.

\item For each node \( u \), we select \( c(u) \in S \cap \tilde{N}_k(u) \) to be the skeleton node closest to \( u \) with respect to \( \delta \), i.e., minimizing \( \delta(u, c(u)) \), breaking ties using node $\ID$s.

\item We construct the skeleton graph \( G_S \) as follows. The node set of \( G_S \) is \( S \). The edge set of \( G_S \) is defined by the following rule: For any triplet of nodes \( u, v, t \) satisfying:
\begin{itemize}
    \item \( t \in \tilde{N}_k(u) \),
    \item \( \{t, v\} \in E \) or \( t = v \),
\end{itemize}
we add an edge between \( c(u) \) and \( c(v) \) to \( G_S \), with weight:
\[
    \delta(c(u), u) + \delta(u, t) + w_{tv} + \delta(v, c(v)),
\]
where \( w_{tv} \) is the weight of edge \( \{t, v\} \).

\item Suppose we are given a function \( \delta_{G_S} \) that provides an \( l \)-approximation to APSP on \( G_S \). We compute a function \( \eta \) that gives a \( 7 l a^2 \)-approximation to APSP on \( G \) as follows:
\begin{itemize}
    \item If \( u \in \tilde{N}_k(v) \) or \( v \in \tilde{N}_k(u) \), set \( \eta(u, v) = \delta(u, v) \).
    \item Otherwise, set 
    $\eta(u, v) = \delta(u, c(u)) + \delta_{G_S}(c(u), c(v)) + \delta(c(v), v)$.
\end{itemize}
\end{enumerate}

Intuitively, the above procedure defines a clustering, where \( S \) is the set of cluster centers, and each node \( u \) is assigned to its closest center \( c(u) \). The construction of edges in the skeleton graph can be viewed as a form of two-hop exploration: Starting from \( u \), move to a node \( t \in \tilde{N}_k(u) \), and from there take one additional hop via an edge in \( E \). For any node \( v \) reachable by such a process, we add an edge between \( c(u) \) and \( c(v) \). A similar two-hop exploration also appears in the proof of \Cref{lem:hopset_final}. 

As usual, in the presence of parallel edges, only the one with the minimum weight is retained. We emphasize that the function $\eta$ is symmetric, i.e., $\eta(u, v) = \eta(v, u)$.

\subsection{Complexity}

We show that the above algorithm can be implemented in \( O(1) \) rounds. The key ideas are outlined below. The selection of the hitting set \( S \) is done via a randomized algorithm: Each node includes itself in \( S \) independently with probability \( \Theta\left( \frac{|S|}{n} \right) \). To ensure that \( S \) is a desired hitting set, an additional step to fix $S$ is needed.

The construction of the edge set of \( G_S \) is carried out using sparse matrix multiplication to compute 4-hop paths of the form \( c(u) \rightarrow u \rightarrow t \rightarrow v \rightarrow c(v) \). Careful ordering of the matrix multiplications is required to preserve sparsity at each step.
Similarly, computing \( \eta(u, v) \) involves sparse matrix multiplication to compute 3-hop paths of the form \( c(u) \rightarrow u \rightarrow v \rightarrow c(v) \).

The formal proof and details are presented below. We begin by introducing the sparse matrix multiplication algorithm that we rely on.

\paragraph{Sparse matrix multiplication algorithm.}  
For an \( n \times n \) matrix \( M \), define the \emph{density} \( \rho_M \) to be the average number of entries per row that are not equal to \( \infty \), which is the identity element of the \( \oplus \) operation. Equivalently, \( \rho_M \) is the total number of non-\( \infty \) entries in \( M \) divided by \( n \).

The following sparse matrix multiplication algorithm is taken from \cite{censorhillel2019fast}. It is particularly useful due to the well-known connection between matrix multiplication over semirings and the computation of shortest paths.

In the matrix multiplication problem, the goal is to compute the product \( ST \) of two given \( n \times n \) matrices \( S \) and \( T \) over the min-plus semiring.

\begin{theorem}[{\cite[Theorem 8]{censorhillel2019fast}}]
	\label{thm:fast-matrix-multiplication}
	The product \( ST \) of two given \( n \times n \) matrices \( S \) and \( T \) over the min-plus semiring can be computed deterministically in
	\[
		O\left( \frac{(\rho _S \rho _T \rho _{ST})^{1/3}}{n^{2/3}} + 1 \right)
	\]
	rounds in \clique, assuming that $\rho _{ST}$ is known beforehand.
\end{theorem}

We emphasize that \Cref{thm:fast-matrix-multiplication} remains applicable even when \( \rho_{ST} \) is only a known \emph{upper bound} on the average number of non-\( \infty \) entries per row.
Using \Cref{thm:fast-matrix-multiplication}, we show that our algorithm can be implemented in \( O(1) \) rounds.

\begin{lemma}\label{clm:skel-1}
The skeleton graph \( G_S \) over a subset \( V_S \subseteq V \) of size \( O\left( \frac{n \log k}{k} \right) \) can be constructed in $O(1)$ rounds w.h.p.
\end{lemma}
\begin{proof} We show that both the hitting set and the skeleton graph can be constructed in $O(1)$ rounds.

\paragraph{Hitting set.} 
We build the hitting set \(S\) using the method of \cite[Lemma 4.1]{dory2021constant}:

\begin{enumerate}
  \item Each node joins \(S\) independently with probability \(\frac{\ln k}{k}\).
  \item Any node \(v\) that still finds no member of \(S\) in \(\tilde{N}_k(v)\) also joins \(S\). \label{ss2}
\end{enumerate}
A node is selected in Step~\ref{ss2} with probability at most
\(\left(1-\tfrac{\ln k}{k}\right)^{k}\le \tfrac1k\).  
Thus, we still have \(\mathbb{E}[|S|] \in \Theta\left(\tfrac{n\ln k}{k}\right)\) by linearity of expectations. By Markov inequality,
\[\Pr\left[|S| \in O(\tfrac{n\ln k}{k})\right]\ge\frac12.\] 
Repeating the procedure \(O(\log n)\) times and selecting the smallest $S$ amplifies the probability to \(1-1/\poly(n)\).  
Each repetition uses only \(O(1)\) bits of communication between each pair of nodes, so all \(O(\log n)\) instances can run in parallel. 
To select the smallest $S$, we can designate $O(\log n)$ nodes $v_1, v_2, \ldots v_{O(\log n)}$, where $v_i$ is responsible for calculating the size of $S$ in the $i$th repetition, by letting all nodes send to $v_i$ an one-bit indicator for whether the node is in $S$. After that, all $v_i$ broadcast the size of $S$ in the $i$th repetition to the entire network.
Step~\ref{ss2} guarantees \(S\cap\tilde{N}_k(v)\neq\emptyset\) for every \(v\in V\), so \(S\) is a valid hitting set.

\paragraph{Skeleton Graph.}
Consider two skeleton nodes \( s_a, s_b \in S \). Although there may be multiple candidate edges between \( s_a \) and \( s_b \), we are only interested in the one with the minimum weight.

The desired edge corresponds to the minimum value of
\[
\delta(s_a, u) + \delta(u, t) + w_{tv} + \delta(v, s_b),
\]
over all nodes \( u, v, t \) such that \( c(u) = s_a \), \( c(v) = s_b \), and \( t \in \tilde{N}_k(u) \).

To compute this value efficiently, we define two functions. For each pair \( (s_a, t) \), let
\[
x(s_a, t) := \min_{u :\;\substack{c(u) = s_a \\ t \in \tilde{N}_k(u)}} \left( \delta(s_a, u) + \delta(u, t) \right).
\]
Similarly, for each pair \( (t, s_b) \), define
\[
y(t, s_b) := \min_{v :\; c(v) = s_b} \left( w_{tv} + \delta(s_b, v) \right).
\]
Then, the edge weight between \( s_a \) and \( s_b \) is given by
\[
\min_{t} \left( x(s_a, t) + y(t, s_b) \right).
\]

\paragraph{Computing the $x$-values and $y$-values.} We can compute the values \( x(s_a, t) \) and make them known to both \( s_a \) and \( t \) in \( O(1) \) rounds as follows. Each node \( u \) sends the tuple \( (c(u),\, \delta(c(u), u) + \delta(u, t)) \) to every node \( t \in \tilde{N}_k(u) \). Then, for each skeleton node \( s_a \), every node \( t \) selects the minimum second value among all received messages of the form \( (s_a,\, \cdot) \), thereby computing \( x(s_a, t) \). Finally, each \( t \) sends the corresponding value \( x(s_a, t) \) back to \( s_a \).

A similar procedure is used to compute \( y(t, s_b) \) in \( O(1) \) rounds. Each node \( v \) sends the tuple \( (c(v),\, w_{tv} + \delta(v, c(v))) \) to all of its neighbors \( t \) in the original graph. Then, for each skeleton node \( s_b \), each node \( t \) selects the minimum second value among all received messages of the form \( (s_b,\, \cdot) \), thereby computing \( y(t, s_b) \). Finally, each \( t \) sends \( y(t, s_b) \) to \( s_b \).

\paragraph{Matrix multiplication.} Finally, we obtain the edge weights of \( G_S \) via a single sparse min-plus matrix multiplication.  
Define two \( n \times n \) matrices \( X \) and \( Y \), where \( X[s_a, t] = x(s_a, t) \) and \( Y[t, s_b] = y(t, s_b) \) whenever these values are defined, and \( \infty \) otherwise. For any pair of skeleton nodes \( s_a, s_b \in S \), the weight of the edge \( \{s_a, s_b\} \) is given by
\[
\min_{t} \left( X[s_a, t] + Y[t, s_b] \right),
\]
which equals the \((s_a, s_b)\)-entry of the matrix product \( X Y \) over the min-plus semiring.

We now bound the density parameters required by \Cref{thm:fast-matrix-multiplication}.  
Matrix \( X \) has density \( \rho_X \leq k \), since a non-\( \infty \) entry  \( X[s_a, t] \) implies the existence of some node \( u \) such that \( c(u) = s_a \) and \( t \in \tilde{N}_k(u) \), and there are at most \( n k \) such pairs overall.  
Matrix \( Y \) has density \( \rho_Y \leq |S| \in O\left( \frac{n \log k}{k} \right) \), as any non-\( \infty \) entry  \( Y[t, s_b] \) satisfies $s_b \in S$, implying that each row has at most $|S|$ non-\( \infty \) entries.  
Similarly, the product \( XY \) satisfies \( \rho_{XY} \leq |S|^2/n \in O\left( \frac{n \log^2 k}{k^2} \right) \), since  \( XY \)  contains at most $|S|^2$ non-\( \infty \) entries.
Plugging these bounds into \Cref{thm:fast-matrix-multiplication} shows that the matrix multiplication completes in \( O(1) \) rounds.  
\end{proof}

\begin{lemma}\label{clm:skel-2}
 Given an $l$-approximation of APSP $\delta _{G_S}$ on $G_S$, the APSP approximation $\eta $ on $G$ can be computed in $O(1)$ rounds.
\end{lemma}
\begin{proof}
We show that all values \( \eta(u, v) \) can be computed efficiently. If \( u \in \tilde{N}_k(v) \) or \( v \in \tilde{N}_k(u) \), then the computation of \( \eta(u, v) = \delta(u, v) \) is trivial. Otherwise,
\[
\eta(u, v) = \delta(u, c(u)) + \delta_{G_S}(c(u), c(v)) + \delta(c(v), v).
\]
To express this computation in matrix form, define an \( |S| \times n \) matrix \( A \) such that \( A[c(u), u] = \delta(c(u), u) \) and all other entries are \( \infty \). Since each column contains exactly one finite entry, the density of \( A \) is \( \rho_A = 1 \). Let \( D \) be the \( |S| \times |S| \) matrix representing the approximate distances on the skeleton graph, where \( D[s_a, s_b] = \delta_{G_S}(s_a, s_b) \) and all other entries are \( \infty \).

With these definitions, computing \( \eta \) reduces to evaluating the matrix product \( A^\top D A \) over the min-plus semiring. Each multiplication involving \( A \) or \( A^\top \) takes \( O(1) \) rounds by \Cref{thm:fast-matrix-multiplication}, since their density is constant. Therefore, we can first compute the intermediate matrix \( B = D A \), and then compute \( A^\top B \) to obtain \( \eta \). The entire computation completes in \( O(1) \) rounds.
\end{proof}

\subsection{Correctness}

Consider any pair of nodes \( u \) and \( v \) in \( G \). We need to show that \( \eta(u, v) \leq 7 l a^2 \cdot d(u, v) \). If \( v \in \tilde{N}_k(u) \) or \( u \in \tilde{N}_k(v) \), then we have \( \eta(u, v) = \delta(u, v) \leq a \cdot d(u, v) \), and the claim holds. Therefore, we may assume that \( v \notin \tilde{N}_k(u) \) and \( u \notin \tilde{N}_k(v) \) in the subsequent discussion.

\paragraph{Decomposition into segments.}

Let \( u = v_0 \to v_1 \to \cdots \to v_m = v \) be a shortest path from \( u \) to \( v \).  
We define a sequence of nodes \( u_0, t_0, u_1, t_1, \dots, u_p, t_p \) along this path as follows:

\begin{itemize}
    \item Set \( u_0 = u \).
    \item For each \( i \geq 0 \), let \( t_i \) be the rightmost node on the path (i.e., the node \( v_j \) with the largest index \( j \)) such that \( t_i \in \tilde{N}_k(u_i) \).
    \item Define \( u_{i+1} \) to be the node immediately following \( t_i \) on the path.
    \item Repeat this process until the last selected node \( t_p \) equals \( v \).
\end{itemize}

An illustration of this construction is shown in \Cref{fig:skeleton-illustration}.
Since \( v \notin \tilde{N}_k(u) \), it follows that \( p \geq 1 \). For notational convenience, we define \( s_i = c(u_i) \) for each \( 0 \leq i \leq p \), and \( s_\ast = c(v) \). While it may happen that \( u_i = t_i \), we always have \( t_i \neq u_{i+1} \) by construction.

	\begin{figure}
		\def\nodedistancea{0.95*2cm}
		\def\nodedistanceb{0.95*1.3cm}
		\def\nodedistancec{0.95*1.5cm}
		\begin{center}
\begin{tikzpicture}[scale=1, node distance=\nodedistancea, bignode/.style={circle, draw, minimum size=0.7cm}, red/.style={fill=red!30}]
	\node [bignode] (u0) {$u_0$};
	\path (u0.west) node [anchor=east] {$u=$};
	\node [bignode, red](s0) [above of=u0, node distance=\nodedistancea] {$s_0$};
	\node [bignode] (w0) [right of=u0] {$t_0$};
	\node [bignode] (u1) [right of=w0, node distance=\nodedistanceb]{$u_1$};
	\node [bignode, red](s1) [above right of=u1, node distance=\nodedistancea] {$s_1$};
	\node [bignode] (w1) [right of=u1] {$t_1$};
	\node [bignode] (u2)  [right of=w1, node distance=\nodedistanceb]{$u_2$};
	\node [bignode, red](s2) [above right of=u2, node distance=\nodedistancec] {$s_2$};
	\node [bignode](w2) [right of=u2, node distance=\nodedistancec] {$t_2$};
	\node [bignode, red](sv) [below of=w2, node distance=\nodedistancec] {$s_*$};
	\path (w2.east) node [anchor=west] {$=v$};

    \draw (u0)--(w0);
	\draw [dashed] (u0)--(s0);
    \draw (w0)--(u1);
	\draw [dashed] (u1)--(s1);
    \draw (u1)--(w1);
    \draw (w1)--(u2);
	\draw [dashed] (u2)--(s2);
    \draw (u2)--(w2);
	\draw [dashed] (w2)--(sv);

    \draw (u0) circle (2.5cm);
    \draw (u1) ++(-90:2.5cm) arc (-90:90:2.5cm);
    \draw (u2) ++(-90:2.7cm) arc (-90:90:2.7cm);
\end{tikzpicture}
\end{center}
\caption{Illustration of the construction. Red nodes are skeleton nodes.}
\label{fig:skeleton-illustration}
\end{figure}
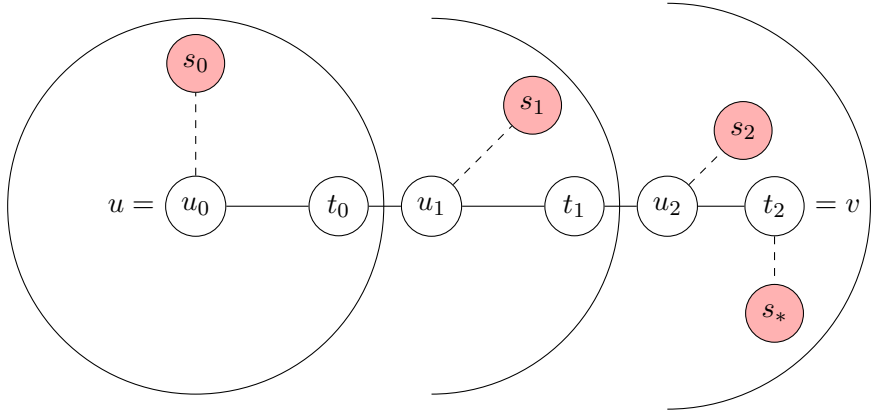

\paragraph{The special case of \( a = 1 \).}
As a warm-up before presenting the full proof, we consider the special case when \( a = 1 \), meaning \( \delta(u, v) = d(u, v) \) is the exact distance between \( u \) and \( v \). We sketch the argument that, in this setting, the \( \eta \)-distance along the path 
\[
u=u_0 \dashrightarrow s_0 \to s_1 \to \cdots \to s_p \to s_\ast \dashrightarrow t_p = v
\]
is at most a constant multiple of \( d(u, v) \).

By the definition of $\eta$, the weight of each edge \( s_i \to s_{i+1} \) under $\eta$ can be upper bounded by the length the path under $\delta=d$
\[
s_i \dashrightarrow u_i \to t_i \to u_{i+1} \dashrightarrow s_{i+1},
\]
where the solid arrows (\( \to \)) represent edges that lie on the original shortest path from \( u \) to \( v \), and the dashed arrows (\( \dashrightarrow \)) represent edges outside that path. The dashed arrows connect a node to its cluster center in the skeleton graph construction. 
Similarly, for the last segment \( s_p \to s_{*} \), we can consider the path
\[
s_p \dashrightarrow u_p \to t_p \dashrightarrow s_{*}.
\]
Our proof uses a charging argument. For each dashed arrow in the constructed path, we identify a subpath of the original shortest path from \( u \) to \( v \) whose length is at least that of the dashed arrow. We do this in such a way that each edge of the shortest path is charged at most \( O(1) \) times, which implies the desired \( O(1) \)-approximation.

\begin{itemize}
    \item For each \( i < p \), the segment \( u_i \dashrightarrow s_i \) and its reversal \( s_i \dashrightarrow u_i \) are charged to the subpath \( u_i \to u_{i+1} \) of the original shortest path. Since \( s_i \in \tilde{N}_k(u_i) \) and \( u_{i+1} \notin \tilde{N}_k(u_i) \), we have \(d(u_i, s_i) \leq d(u_i, u_{i+1}) \).
    
    \item The length of the segment \( u_p \dashrightarrow s_p \), as well as its reversal \( s_p \dashrightarrow u_p \), is at most the length of the segment \( u_p \to t_p \dashrightarrow s_* \), by the choice of \( s_p \).
    
    \item Finally, the segment \( t_p \dashrightarrow s_* \), as well as its reversal \( s_* \dashrightarrow t_p \), has length at most \( d(u, v) \), so we can charge it to the entire $u$-$v$ path.
\end{itemize}

\paragraph{The general case.}

We start by proving inequalities that we use to bound $\eta (u, v)$.

\begin{claim}\label{ineq:a1}
    For each $i \in \{0,1,\ldots, p\}$, $d_{G_S}(s_i,s_{i+1}) \leq \delta (s_i, u_i)+\delta (u_i, t_i)+d(t_i, u_{i+1})+\delta (u_{i+1}, s_{i+1})$.
\end{claim}

\begin{proof}
Since $t_i$ and $u_{i+1}$ are adjacent on the shortest path from $u$ to $v$, we have $d(t_i, u_{i+1}) = w_{t_i u_{i+1}}$. The claim then follows directly from the construction of $G_S$, which adds an edge between $s_i = c(u_i)$ and $s_{i+1} = c(u_{i+1})$ with weight 
$\delta(s_i, u_i) + \delta(u_i, t_i) + w_{t_i u_{i+1}} + \delta(u_{i+1}, s_{i+1})$.
\end{proof}

\begin{claim}\label{ineq:a2}
 $d_{G_S} (s_p, s_*) \leq  \delta (s_p,u_p)+\delta (u_p, v)+\delta (v, s_*)$.
\end{claim}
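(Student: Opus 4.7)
The plan is to produce a single $G_S$-edge between $s_p$ and $s_*$ whose weight is exactly $\delta(s_p, u_p) + \delta(u_p, v) + \delta(v, s_*)$; this will immediately yield the desired bound on $d_{G_S}(s_p, s_*)$. The three ingredients I need are all built into the setup: first, by the choice of $t_p$, we have $v = t_p \in \tilde{N}_k(u_p)$, so $\delta(u_p, v) \leq a \cdot d(u_p, v)$ and $u_p$ locally knows this value; second, $s_p = c(u_p) \in \tilde{N}_k(u_p)$ by the definition of $c$, so $\delta(u_p, s_p)$ is similarly well-behaved; and third, $s_* = c(v) \in \tilde{N}_k(v)$ by the definition of $c$.

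The key step is to invoke the $G_S$-edge construction with $u = u_p$ and $t = v$; the condition $t \in \tilde{N}_k(u)$ is satisfied by the first ingredient. In the degenerate case where the endpoint $v^*$ of the construction coincides with $t$ itself (so the intermediate $G$-edge $(t, v^*)$ collapses to a trivial self-connection of weight $0$), the construction produces an edge between $c(u_p) = s_p$ and $c(v) = s_*$ of weight $\delta(u_p, s_p) + \delta(u_p, v) + 0 + \delta(v, s_*)$, matching the right-hand side exactly. From this, $d_{G_S}(s_p, s_*)$ is at most this weight, which is what the claim asserts.

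The main obstacle is that the construction as literally stated in line 4 requires a proper $G$-edge $\{t, v^*\}$ with $t \neq v^*$, so the degenerate case is not automatically covered. I would resolve this by a small, cost-free augmentation: for every $u$ and every $t \in \tilde{N}_k(u)$, also add a skeleton-shortcut edge $(c(u), c(t))$ to $G_S$ of weight $\delta(u, c(u)) + \delta(u, t) + \delta(t, c(t))$. These extra edges can be obtained by the same sparse matrix multiplication framework already used in line 4 (essentially, adjoin a weight-$0$ diagonal to the $Y$ matrix), preserving the $O(1)$-round complexity and the $O(\frac{n \log k}{k})$ density bound on $G_S$'s output. Since the new edges only \emph{decrease} distances in $G_S$, the $7la^2$ approximation bound for $\eta$ established in the rest of the proof continues to hold verbatim, and the claim becomes a direct consequence of the construction applied with $(u, t) = (u_p, v)$.
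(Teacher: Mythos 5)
Your proposal is correct and is essentially the paper's own argument: the paper also derives the bound from a single $G_S$-edge, applying the line-4 construction with $(u,t)=(u_p,t_p)$ and using $t_p=v$ to read the weight as $\delta(s_p,u_p)+\delta(u_p,v)+\delta(v,s_*)$. The difference is that you make explicit a point the paper glosses over: with $t_p=v$, the construction as literally written needs a direct edge $\{t,v\}$, i.e.\ implicitly a self-loop of weight $w_{vv}=0$, which does not exist in a simple graph. Your fix --- adding shortcut edges between $c(u)$ and $c(t)$ of weight $\delta(u,c(u))+\delta(u,t)+\delta(t,c(t))$ for every $t\in\tilde N_k(u)$, implemented by allowing $v=t$ with weight $0$ in the $Y$ matrix --- is sound: it leaves $|V_S|$ and the round complexity unchanged, each new edge still upper-bounds the true $G$-distance between its endpoints (so $\eta\ge d$ is preserved), and it only decreases $d_{G_S}$, so the upper-bound analysis goes through verbatim and the claim follows by taking $(u,t)=(u_p,v)$. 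It is worth noting that in the exact case $a=1$, $\delta=d$, $\tilde N_k=N_k$ (the setting in every application of \cref{skeleton} except \cref{thm:special-case-2-approximate-APSP}) no modification is needed: take $t$ to be the node immediately preceding $v$ on the chosen shortest path; since weights are positive, $d(u_p,t)<d(u_p,v)$ forces $t\in N_k(u_p)$ given $v=t_p\in N_k(u_p)$, and $d(u_p,t)+w_{tv}=d(u_p,v)$ along the shortest path, so line 4 already yields an $s_p$--$s_*$ edge of the required weight. In the approximate case $a>1$ that fallback is unavailable (membership in $\tilde N_k(u_p)$ is not monotone in distance), so a convention like yours is genuinely needed for the claim as stated; your patch is thus a legitimate tightening of the construction rather than a gap in your argument.
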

\begin{proof}
Similar to the proof of \Cref{ineq:a1}, in the construction of $G_S$, we add an edge from $s_p$ to $s_*$ with weight $\delta(s_p, u_p) + \delta(u_p, t_p) + w_{t_p v} + \delta(v, s_*)$. Since $t_p = v$, we have $w_{t_p v} = 0$, and the claim follows.
\end{proof}

\begin{claim}\label{ineq:a3}
    For each $i \in \{0,1,\ldots, p\}$, $\delta (u_i, s_i) \leq  a \cdot d(u_i, u_{i+1})$.
\end{claim}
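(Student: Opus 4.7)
The plan is to observe that this claim follows almost immediately from the defining property of an $a$-approximate $k$-nearest set combined with how the sequence $(u_0, t_0, u_1, t_1, \dots, u_p, t_p)$ was constructed. Here ``valid $i$'' means $0 \le i < p$, so that $u_{i+1}$ is defined; note that for $i = p$ the claim would not even type-check.

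First I would unwind the two relevant definitions. By the choice made in the algorithm (\cref{sec:alg_skeleton}, Step~2), $s_i = c(u_i)$ is selected among the skeleton nodes lying in $˜N_k(u_i)$, and in particular $s_i \in ˜N_k(u_i)$. On the other hand, by the construction of the sequence, $t_i$ is the rightmost node of the shortest path $u = v_0 \to \cdots \to v_m = v$ that belongs to $˜N_k(u_i)$, and $u_{i+1}$ is the node immediately to the right of $t_i$ on that path. In particular, $u_{i+1} \notin ˜N_k(u_i)$; otherwise $t_i$ would not have been the rightmost such node.

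Now I would invoke the hypothesis from \cref{skeleton} on the approximate $k$-nearest sets: for every $v \in ˜N_k(u)$ and every $t \notin ˜N_k(u)$, $\delta(u, v) \le a \cdot d(u, t)$. Applying this with $u = u_i$, $v = s_i$, and $t = u_{i+1}$ yields
\[
\delta(u_i, s_i) \;\le\; a \cdot d(u_i, u_{i+1}),
\]
which is exactly the claim.

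I do not expect any real obstacle here; the only thing one has to be careful about is the edge case $i = p$, which is excluded by interpreting ``valid $i$'' as $0 \le i < p$. All the work was actually done when the sequence was defined and when $c(u_i)$ was chosen inside $˜N_k(u_i)$; the inequality is then a one-line consequence of the defining property of $˜N_k(u_i)$.
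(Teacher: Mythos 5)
Your proof is correct and matches the paper's argument exactly: both rely on $s_i = c(u_i) \in ˜N_k(u_i)$ and $u_{i+1} \notin ˜N_k(u_i)$ (since $t_i$ is the rightmost path node in $˜N_k(u_i)$), then apply the defining property of the $a$-approximate $k$-nearest set. Your write-up simply spells out these two facts more explicitly than the paper does.
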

\begin{proof}
    This holds because, by their definitions, $s_i  \in  \tilde N_k(u_i)$ and $u_{i+1} \notin \tilde N_k(u_i)$.
\end{proof}

\begin{claim}\label{ineq:a4}
 $\delta (u_p, s_p) \leq  a \cdot d(u_p, s_*)$.
\end{claim}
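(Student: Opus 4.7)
My plan is to prove Claim \ref{ineq:a4} by a short case analysis on whether $s_* \in \tilde{N}_k(u_p)$, using the defining properties of $c(\cdot)$ and of the approximate $k$-nearest sets stated in the hypotheses of Lemma \ref{skeleton}. The key facts I will use are: (i) $s_p = c(u_p)$ is by definition a skeleton node that lies in $\tilde{N}_k(u_p)$ and minimizes $\delta(u_p, \cdot)$ among such skeleton nodes; (ii) for every $w \in \tilde{N}_k(u_p)$, we have $\delta(u_p, w) \leq a \cdot d(u_p, w)$; and (iii) for every $w \in \tilde{N}_k(u_p)$ and every $t \notin \tilde{N}_k(u_p)$, we have $\delta(u_p, w) \leq a \cdot d(u_p, t)$.

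First I would handle the case $s_* \notin \tilde{N}_k(u_p)$. Applying property (iii) with $w = s_p$ (which is in $\tilde{N}_k(u_p)$ by (i)) and $t = s_*$ immediately gives $\delta(u_p, s_p) \leq a \cdot d(u_p, s_*)$, as required.

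Next I would handle the case $s_* \in \tilde{N}_k(u_p)$. Note that $s_*$ is a skeleton node (since $s_* = c(v) \in S$), so it is a candidate for the minimizer defining $c(u_p)$. Hence by (i), $\delta(u_p, s_p) \leq \delta(u_p, s_*)$. Combining with (ii) applied to $w = s_*$ yields $\delta(u_p, s_p) \leq \delta(u_p, s_*) \leq a \cdot d(u_p, s_*)$, completing this case.

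I do not expect a real obstacle here: the claim is essentially a bookkeeping consequence of the definition of $c(u_p)$ and the two approximation guarantees that $\tilde{N}_k(\cdot)$ must satisfy. The only subtlety worth flagging is that the proof genuinely requires the fact that $s_*$ is itself a skeleton node (so that it competes in the minimization defining $c(u_p)$ in the second case); if one forgot this, one might be tempted to compare $\delta(u_p, s_p)$ only to $\delta(u_p, v)$, which is weaker and would not yield the stated bound.
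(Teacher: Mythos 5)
Your proposal is correct and matches the paper's own argument: the same case split on whether $s_* \in \tilde N_k(u_p)$, using the approximate-nearest-set condition when $s_*$ is outside, and the minimality of $\delta(u_p, c(u_p))$ among skeleton nodes (together with $\delta(u_p, s_*) \le a \cdot d(u_p, s_*)$) when it is inside. No gaps.
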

\begin{proof}
Either $s_*$ is not in the set $\tilde{N}_k(u_p)$, in which case the claim follows as in the previous argument, or $s_* \in \tilde{N}_k(u_p)$. In the latter case, since $u_p$ chooses $s_p$ to be its center from within $\tilde{N}_k(u_p)$, we have
$\delta(u_p, s_p) \leq d(u_p, s_*) \leq a \cdot d(u_p, s_*)$.
\end{proof}

We are now ready to bound $\eta (u, v)$.

\begin{lemma}\label{lem:eta-bound}
    $\eta $ is a $7 l a^2$-approximation of APSP on $G$.
\end{lemma}
\begin{proof}
Consider any pair of nodes \( u \) and \( v \) in \( G \). If \( v \in \tilde{N}_k(u) \) or \( u \in \tilde{N}_k(v) \), then by the definition of \( \eta \) and the approximation guarantee of \( \delta \) in \Cref{skeleton}, we have
\[
d(u, v) \leq \delta(u, v) = \eta(u, v) \leq a \cdot d(u, v),
\]
as required.

For the remainder of the proof, assume that \( v \notin \tilde{N}_k(u) \) and \( u \notin \tilde{N}_k(v) \). In this case, we have
\[
\eta(u, v) = \delta(u, c(u)) + \delta_{G_S}(c(u), c(v)) + \delta(c(v), v).
\]
By the approximation guarantees of \( \delta \) and \( \delta_{G_S} \) in \Cref{skeleton} and our construction of \( G_S \), we have
\[
\eta(u, v) \geq d(u, c(u)) + d(c(u), c(v)) + d(c(v), v) \geq d(u, v).
\] To complete the proof, it remains to show that \( \eta(u, v) \leq 7 l a^2 \cdot d(u, v) \).
\colorlet{darkgreen}{green!70!black}
\begin{align*}
\eta (u, v) 
&=  \delta (u, s_0)+\delta (v, s_*)+\delta _{G_S} (s_0, s_*) \\
&\leq  l \cdot ( \delta (u, s_0)+\delta (v, s_*)+d_{G_S} (s_0, s_*) ) \\
&\leq  l \cdot \bigg( \delta (u, s_0)+\delta (v, s_*) \\
&\qquad + \sum _{i=0}^{p-1} \mathcolor{red}{d_{G_S} (s_i, s_{i+1})} + \mathcolor{blue}{d_{G_S} (s_p, s_*)} \bigg) \quad \text{(triangle inequality for $d_{G_S}$)} \\
&\leq  l \cdot \bigg( \delta (u, s_0)+\delta (v, s_*) \\
& \qquad + \sum _{i=0}^{p-1} \mathcolor{red}{\delta (u_i, s_i)+\delta (u_i, t_i)+d(t_i, u_{i+1})+\delta (u_{i+1}, s_{i+1})} \\
&\qquad +\mathcolor{blue}{\delta (u_p,s_p)+\delta (u_p, v)+\delta (v, s_*)} \bigg) \quad \text{(from \textcolor{red}{\Cref{ineq:a1}} and \textcolor{blue}{\Cref{ineq:a2}})}\\
&= l \cdot \bigg( 
\sum _{i=0}^{p-1} 2 \mathcolor{magenta}{\delta (u_i, s_i)}+\delta (u_i, t_i)+d(t_i, u_{i+1}) \\
&\qquad +2 \mathcolor{orange}{\delta (u_p,s_p)}+\delta (u_p, v)+2 \delta (v, s_*) \bigg) \\
&\leq  l \cdot \bigg( 
	\sum _{i=0}^{p-1} 2 \mathcolor{magenta}{a \cdot d(u_i, u_{i+1})}+\delta (u_i, t_i)+d(t_i, u_{i+1}) \\
&\qquad +2\mathcolor{orange}{a \cdot d(u_p,s_*)}+\delta (u_p, v)+2 \delta (v, s_*) \bigg) \quad \text{(from \textcolor{magenta}{\Cref{ineq:a3}} and \textcolor{orange}{\Cref{ineq:a4}})}\\
&\leq  l \cdot \bigg( 
\sum _{i=0}^{p-1} 2 a \cdot d(u_i, u_{i+1})+\mathcolor{darkgreen}{a \cdot d(u_i, t_i)}+d(t_i, u_{i+1}) \\
&\qquad +2a \cdot d(u_p,s_*)+\mathcolor{darkgreen}{a \cdot d(u_p, v)}+2\mathcolor{darkgreen}{a \cdot d(v, s_*)} \bigg) \quad \text{(\textcolor{darkgreen}{assumption on $\delta $})} \\
&\leq  la \cdot \bigg( 
	\sum _{i=0}^{p-1} 2 d(u_i, u_{i+1})+d(u_i, t_i)+d(t_i, u_{i+1}) \\
	&\qquad +2d(u_p,s_*)+d(u_p, v)+2d(v, s_*) \bigg) \quad \text{($a \geq 1$)} \\
&= la \cdot \left( 3 d(u_0, u_p) + 2d(u_p,s_*)+d(u_p, v)+2d(v, s_*) \right) \\
&\leq  la \cdot \left( 3 d(u_0, u_p) + 2\left(d(u_p,v)+d(v, s_*)\right)+d(u_p, v)+2d(v, s_*) \right) \\
& \hspace{7cm} \text{(triangle inequality for $d$)} \\
&= la \cdot \left( 3 d(u, v) + 4 d(v, s_*) \right) \\
&\leq  la \cdot \left( 3 d(u, v) + 4 \delta (v, s_*) \right) \\
&\leq  la \cdot \left( 3 d(u, v) + 4a \cdot d(v, u) \right) \\
&= la \cdot (3 + 4a) \cdot d(u, v) \\
&\leq  7la^2 \cdot d(u, v). \qedhere
\end{align*}
\end{proof}

We summarize the discussion as a proof of \Cref{skeleton}.

\begin{proof}[Proof of \Cref{skeleton}]
By \Cref{clm:skel-1}, we can construct a weighted undirected graph \( G_S \) over a subset \( V_S \subseteq V \) of size \( O\left( \frac{n \log k}{k} \right) \) in \( O(1) \) rounds w.h.p. Given any \( l \)-approximation of APSP on \( G_S \), we can deterministically compute a \( 7 l a^2 \)-approximation of APSP on \( G \) in \( O(1) \) rounds. The round complexity follows from \Cref{clm:skel-2}, and the approximation factor from \Cref{lem:eta-bound}.
\end{proof}

\section{APSP Approximation in Small Weighted Diameter Graphs}\label{sec:algos}

In this section, we combine the ingredients developed so far to obtain $O(\log \log \log n)$-round algorithms for $O(1)$-approximation of APSP, assuming the weighted diameter is bounded by $d \in (\log n)^{O(1)}$. Recall that $\clique[B]$ denotes the variant of the \clique model where messages have size $O(B)$.

\begin{theorem}[restate=approximateAPSPweak, name=APSP approximation in small weighted dimater graphs]\label{weaker-approximate-APSP-first}\label{weaker-approximate-APSP-2}
	Suppose the weighted diameter of the graph $G$ is $d  \in  (\log n)^{O(1)}$. The following statements hold.
    \begin{itemize}
        \item In the \clique model, in $O(\log \log \log n)$ rounds, a $21$-approximation of APSP can be computed w.h.p.
        \item In the $\clique[\log^3 n]$ model, in $O(\log \log \log n)$ rounds, a $7$-approximation of APSP can be computed w.h.p.
    \end{itemize}
\end{theorem}

In \Cref{sec:append_prelim_algorithm}, we review an existing spanner algorithm~\cite{chechik2022constant}, which yields an $O(\log n)$-approximation of APSP in $O(1)$ rounds. In \Cref{subsec:algo:overview}, we prove \Cref{approximate-reduction}, which shows how to transform an $a$-approximation into an $O(\sqrt{a})$-approximation in $O(1)$ rounds. This reduction allows us to progressively improve the approximation factor from $O(\log n)$ to $O(\sqrt{\log n})$, $O(\log^{1/4} n)$, $O(\log^{1/8} n)$, and so on. This iterative improvement leads to the proof of \Cref{weaker-approximate-APSP-first} in \Cref{subsec:weaker-approximate-APSP}.

\subsection{Spanners} \label{sec:append_prelim_algorithm}

We use the following algorithm from \cite{chechik2022constant} to compute spanners. Among other uses, it is used to compute an $O(\log n)$-spanner that uses $O(n)$ edges, which is helpful to bootstrap our algorithm.
\begin{lemma}[{\cite[Theorems 1.2 and 1.3]{chechik2022constant}}]
	\label{constant-round-spanner-0}
	For any constant $\varepsilon > 0$ and integer $k \geq 1$, there is a constant-round  algorithm that w.h.p.\  computes the following spanners for weighted undirected graphs in the $\clique$ model.
    \begin{itemize}
        \item  A $(1+\varepsilon)(2k - 1)$-spanner with $O\left(n^{1 + 1/k}\right)$ edges.
        \item  A $(2k - 1)$-spanner with $O\left(k \cdot n^{1 + 1/k}\right)$ edges.        
    \end{itemize}
\end{lemma}

From the spanner-computation algorithm in \Cref{constant-round-spanner-0}, we have the following corollaries,
which are used to compute approximations to the shortest distance. 
In our algorithms we sometimes compute APSP on a smaller skeleton graph $G_S$ over $O(n^{1-1/b})$ nodes, such that the nodes of $G_S$ know their incident edges.

\begin{corollary}\label{constant-round-APSP}
Let \( b \geq 1 \) be an integer. Let \( G \) be the input graph with \( n \) nodes, and let \( G_S \) be a weighted undirected subgraph over a subset of \( N \in O(n^{1 - 1/b}) \) nodes. Assume that each node in \( G_S \) knows its incident edges in \( G_S \).
Then, for any constant \( \varepsilon > 0 \), there exists a constant-round \clique algorithm that, w.h.p., computes a \( (1+\varepsilon)(2b - 1) \)-approximation of APSP in \( G_S \). Moreover, the output is made known to all nodes in \( G \). %
\end{corollary}

\begin{proof}
	We can use \Cref{constant-round-spanner-0} to compute a $(1+\varepsilon )(2b-1)$-spanner with $O(N^{1+1/b})\subseteq O(n)$ edges for the graph $G_S$. Since the spanner has $O(n)$ edges, we can broadcast it to all nodes of $G$ in $O(1)$ rounds, which results in $(1+\varepsilon )(2b-1)$-approximation of all distances in $G_S$. %
\end{proof}

In particular, in the special case that $G_S = G$, we obtain an $(\alpha \log n)$-approximation for any constant $\alpha > 0$.

\begin{corollary}\label{constant-round-APSP-log-n}
For any constant $\alpha > 0$, an $(\alpha \log n)$-approximation of APSP in weighted undirected graphs can be computed w.h.p.\ in $O(1)$ rounds in the \clique model. Moreover, the output is known to all nodes in $G$.
\end{corollary}
\begin{proof}
This follows by setting the parameters in \Cref{constant-round-APSP} as $G_S = G$, $b = \left\lfloor \frac{\alpha \log n}{3} \right\rfloor$, and $\varepsilon = 0.1$, so that the approximation factor becomes $(1+\varepsilon)(2b - 1) \le \alpha \log n$ and $n^{1 - 1/b} \in \Theta(n)$. 
\end{proof}

\subsection{Approximation Factor Reduction}\label{subsec:algo:overview} %

We are now ready to prove \Cref{approximate-reduction}. 

\approximatereduction*

We now describe the high-level idea of the algorithm.

\begin{enumerate}
    \item We begin by applying \Cref{approximate-to-hop} to compute a $\sqrt{n}$-nearest \( O(a \log d) \)-hopset \( H \).
    
    \item Next, we use \Cref{fast-matrix-exp} to let each node learn its distances to the \( k \)-nearest nodes, for an appropriately chosen \( k \). We set \( h = \frac{1}{2} \cdot a^{1/4} \) and  \( k = n^{1/h} = n^{2a^{-1/4}} \). With these parameters, we can apply \Cref{fast-matrix-exp} to compute the distances from each node to its \( k \)-nearest nodes.
    
    \item We then apply \Cref{lemma_skeleton_simplified} to construct a skeleton graph \( G_S \) over a node set \( V_S \subseteq V \) of size \( O\left(\frac{n \log k}{k}\right) \), such that an approximate APSP  in \( G_S \) yields an approximate APSP in \( G \).
    
    \item Finally, leveraging the fact that \( G_S \) has only \( O\left(\frac{n \log k}{k}\right) \) nodes, we use \Cref{constant-round-APSP} to compute an \( O(\sqrt{a}) \)-approximation of APSP in \( G_S \), which in turn gives an \( O(\sqrt{a}) \)-approximation of APSP in \( G \).
\end{enumerate}

We next provide the full proof.

\begin{proofof}{\Cref{approximate-reduction}}
In this proof, we assume that \( n \) is sufficiently large so that \( \log(2\log n) \leq \sqrt{\log n} \) and \( a \leq \log^2 n \); otherwise, the problem can be solved by brute force in \( O(1) \) rounds.
Starting from an \( a \)-approximation of APSP, we apply the following four tools in sequence: \Cref{approximate-to-hop}, \Cref{fast-matrix-exp}, \Cref{lemma_skeleton_simplified}, and \Cref{constant-round-APSP}, as outlined in the high-level overview above.

\paragraph{Step 1: Hopset construction.}
We apply \Cref{approximate-to-hop} with the parameter \( a \) to obtain a \( \sqrt{n} \)-nearest \( \hat{h} \in  O(a \log d) \)-hopset \( H \).  
Since \( \log d \in a^{O(1)} \), it follows that \( \hat{h} \in a^{O(1)} \) as well.

\paragraph{Step 2: Computing the \( k \)-nearest nodes.}
Since \( H \) is a \( \sqrt{n} \)-nearest \( \hat{h} \)-hopset, it is also a \( k \)-nearest \( \hat{h} \)-hopset for any \( k \leq \sqrt{n} \). Our goal is to choose \( k \) so that we can apply \Cref{fast-matrix-exp}.
To do so, we must find parameters \( h \), \( k \), and \( i \) for \Cref{fast-matrix-exp} such that \( \hat{h} \leq h^i \) and \( k \in O(n^{1/h}) \). Since we aim for an \( O(1) \)-round algorithm, it is necessary that \( i \in O(1) \).

As shown earlier, \( \hat{h} \in a^{O(1)} \), so choosing \( h \in a^{\Omega(1)} \) guarantees that a constant \( i \) suffices. We set \( h = \frac{1}{2} \cdot a^{1/4} \) and \( k = n^{1/h} = n^{2a^{-1/4}} \).
The choice of the constant \( 1/4 \) may seem arbitrary for now, but we will see in later steps that it suffices.
With these parameters, \Cref{fast-matrix-exp} applies, and each node learns the distances to its \( k = n^{2a^{-1/4}} \)-nearest nodes.

\paragraph{Step 3: Skeleton graph construction.}
We now apply \Cref{lemma_skeleton_simplified}. The choice of parameters is again determined by the previous step: Each node knows the \emph{exact} distances to its \( k = n^{2a^{-1/4}} \)-nearest nodes, allowing us to invoke \Cref{lemma_skeleton_simplified} with this value of \( k \).
As a result, we construct a skeleton graph \( G_S \) over a node set \( V_S \subseteq V \) of size \( O\left(\frac{n \log k}{k}\right) \), such that approximate APSP in \( G_S \) leads to an approximate APSP in the original graph \( G \).

\paragraph{Step 4: APSP approximation.}
Finally, we apply \Cref{constant-round-APSP}. The choice of the parameter \( b \) is again constrained by the previous step: Since the skeleton graph \( G_S \) has \( O\left(\frac{n \log k}{k}\right) \) nodes, we must select \( b \) so that \( O\left(\frac{n \log k}{k}\right) \subseteq O\left(n^{1 - 1/b}\right) \). We will later show that setting \( b = \sqrt{a} \) satisfies this condition.

With this choice, \Cref{constant-round-APSP} produces a \( (1 + \varepsilon)(2b - 1) \)-approximation of APSP on \( G_S \). Then, by \Cref{lemma_skeleton_simplified}, this yields a \( 7(1 + \varepsilon)(2b - 1) \)-approximation of APSP on the original graph \( G \).
Choosing \( \varepsilon = \frac{1}{14} \), we obtain the bound \( 7(1 + \varepsilon)(2b - 1) < 15b = 15 \sqrt{a} \), as desired.

\paragraph{Choice of parameter.}
To complete the proof, it remains to show that selecting \( b = \sqrt{a} \) ensures
that \( O\left(\frac{n \log k}{k}\right) \subseteq O\left(n^{1 - 1/b}\right) \),
where \( k = n^{2a^{-1/4}} \).

Intuitively, ignoring the \( O(\log k) \) factor in \( O\left(\frac{n \log k}{k}\right) \) simplifies the expression to
\( \frac{n}{k} = n^{1 - 1/h} \),
where \( h = \frac{1}{2} \cdot a^{1/4} \). Thus, choosing \( b \) to be as small as \( h \) would suffice. The extra safety margin in our actual choice \( b = \sqrt{a} \) compensates for the \( O(\log k) \) factor, which is asymptotically much smaller than both \( n \) and \( k \).

More formally, we claim the following: For sufficiently large \( n \), and any real \( a \geq 1 \), setting \( k = n^{2a^{-1/4}} \) and \( b = \sqrt{a} \) guarantees that \( \frac{n \log k}{k} \leq n^{1 - 1/b} \).

By the assumption that $n$ is large enough, we have:
\begin{align*}
	&&a^{1/4} \log (2 \log n) &\leq  \log n \\
	\Longrightarrow && a^{1/4} \log \left( 2a^{-1/4}\log n\right) &\leq  \log n \\
	\Longrightarrow && a^{-1/4} &\geq \frac{\log \left( 2a^{-1/4}\log n \right)}{\log n}.  \\
	\intertext{Because $a \geq 1$, we have $a^{-1/4}\geq a^{-1/2}=\frac{1}{b}$. Combining estimates, we have:}
	&& 2a^{-1/4} - \frac{1}{b} &\geq \frac{\log \left( 2a^{-1/4}\log n \right)}{\log n} \\
	\Longrightarrow && 1 + \frac{\log \left( 2a^{-1/4}\log n \right)}{\log n} - 2a^{-1/4}  &\leq   1-\frac{1}{b}  \\
	\Longrightarrow && \log n + \log \left( 2a^{-1/4}\log n \right) - 2a^{-1/4}\log n  &\leq  \log n \cdot \left(1-\frac{1}{b}\right) \\
	\Longrightarrow && \log n + \log \log k - \log k &\leq  \log n \cdot \left(1-\frac{1}{b}\right) \\
	\Longrightarrow && n \cdot \frac{\log k}{k} &\leq  n^{1-1/b}.\tag*{\qedhere}
\end{align*}
\end{proofof}

\subsection{APSP Approximation}
\label{subsec:weaker-approximate-APSP}

We are now ready to prove \Cref{weaker-approximate-APSP-first}.

\approximateAPSPweak*

\begin{proof}
	Using \Cref{constant-round-APSP-log-n}, we can compute
	an \( a \)-approximation of APSP with \( a \in O(\log n) \) in $O(1)$ rounds.
We then repeatedly apply \Cref{approximate-reduction}. After 1, 2, 3, \dots\ iterations, we obtain progressively better approximations: \( O(\sqrt{\log n}) \), \( O(\log^{1/4} n) \), \( O(\log^{1/8} n) \), and so on. Thus, after \( O(\log \log \log n) \) rounds, we obtain an \( a \)-approximation with \( a \in O(\log \log n) \). At this point, due to the requirement \( \log d \in a^{O(1)} \), we cannot continue using \Cref{approximate-reduction} to further reduce the approximation factor.

	Instead, we apply a more direct approach, closely resembling the method used in \Cref{sec:tech_overview} to obtain an \( O(\log \log n) \)-round algorithm. First, we pass the \( a \)-approximation of APSP to \Cref{approximate-to-hop} to construct a \( \sqrt{n} \)-nearest \( O(a \log d) \)-hopset \( H \). Since \( a \in O(\log \log n) \) and \( \log d \in O(\log \log n) \), the hop bound is \( O(\log^2 \log n) \).

	Next, we apply \Cref{fast-matrix-exp} with parameters \( k = \sqrt{n} \), \( h = 2 \), and \( i \in O(\log \log \log n) \), such that \( h^i \in O(\log^2 \log n) \). This enables each node to learn the exact distances to its \( \sqrt{n} \)-nearest nodes in \( O(i) \subseteq O(\log \log \log n) \) rounds.

	We then apply \Cref{lemma_skeleton_simplified} with \( k = \sqrt{n} \). The resulting skeleton graph \( H \) contains \( O(\sqrt{n} \log n) \) nodes. Using \Cref{constant-round-spanner-0}, we compute a 3-spanner of this skeleton graph with \( O((\sqrt{n} \log n)^{1 + 1/2}) \subseteq O(n) \) edges, which can be broadcast to the entire network in \( O(1) \) rounds. Therefore, by the conclusion of \Cref{lemma_skeleton_simplified}, we obtain a 21-approximation of APSP in the original graph.

	If we are working in the \( \clique[\log^3 n] \) model, then all \( O(n \log^2 n) \) edges of the skeleton graph \( H \) can be broadcast and exact distances computed in \( O(1) \) rounds, yielding a 7-approximation of APSP in the original graph \( G \).
\end{proof}

\section{APSP Approximation in General Graphs}\label{sect:APSPgeneral}

In this section, we extend \Cref{weaker-approximate-APSP-first} to general graphs. To do so, in \Cref{subsect:scaling}, we prove a \emph{weight scaling} lemma that reduces distance approximation on the original graph to distance approximation on \( O(\log n) \) graphs, each with weighted diameter \( (\log n)^{O(1)} \). The reduction applies to any pair of nodes within \( (\log n)^{O(1)} \) hops. Using this weight scaling lemma, we prove the following result in \Cref{proof-special-case-2}.

\begin{theorem}[name={APSP approximation with large bandwidth}, restate=specialcaseii]
	\label{thm:special-case-2-approximate-APSP}
For any constant $\varepsilon > 0$, a $(7^3  +\varepsilon)$-approximation of APSP in weighted undirected graphs can be computed w.h.p.\ in $O(\log \log \log n)$ rounds in the $\clique[\log^4 n]$ model.
\end{theorem}
 
The bandwidth \( B = \log^4 n \) in \Cref{thm:special-case-2-approximate-APSP} arises from running the algorithm in \Cref{weaker-approximate-APSP-first}---which operates in the \( \clique[\log^3 n] \) model---on \( O(\log n) \) graphs given by the weight scaling lemma. In \Cref{subsec:general_case}, we further extend this result to the standard \clique model by using skeleton graphs to reduce the number of nodes by a polylogarithmic factor.

 \main*

In \Cref{subsec:truncated-algorithm}, by restricting the algorithm to \( O(t) \) rounds, we obtain the following tradeoff between round complexity and approximation factor.

 \truncatedalgorithm*

\subsection{Weight Scaling Lemma}\label{subsect:scaling}

We now state and prove the weight scaling lemma.

\begin{restatable}[Weight scaling lemma]{lemma}{LemmaDiam}\label{hop-to-diameter}
	Let $\delta $ be a given $h$-approximation of APSP for the weighted undirected graph $G$, and let $\varepsilon > 0$ be a constant.
	 We can deterministically compute in zero rounds a sequence of
	$O(\log n)$ weighted undirected graphs \[ \{ G_0, G_1, \ldots , G_{O(\log n)}\} \]
	over the node set of $G$ and with weighted diameter at most $\ceil{\frac{2}{\varepsilon }} \cdot h^2$, such that if an
	$l$-approximation of APSP on each of these graphs, $\{ \delta _{G_0}, \delta _{G_1}, \ldots , \delta _{G_{O(\log n)}}\}$, is computed, then we can deterministically compute in zero rounds a function $\eta $ satisfying the following conditions:
	\begin{itemize}
		\item For all pairs of nodes $u$ and $v$, \[\eta (u, v) \geq d_G(u, v).\]
		\item If  $u$ and $v$ have some shortest path between them of no more than $h$ hops in $G$, then \[\eta (u, v) < (1+\varepsilon )l \cdot d_G(u, v).\]
	\end{itemize}
\end{restatable}

In \Cref{hop-to-diameter}, as usual, the graphs \( G_i \) and the distance approximations \( \delta_{G_i} \) are known in the following sense: Each node \( u \) knows all edges incident to it in \( G_i \), as well as the value \( \delta_{G_i}(u, v) \) for every node \( v \).

In applications of \Cref{hop-to-diameter} in this paper, we have \( l < h \). Thus, we begin with a coarse \( h \)-approximation \( \delta \) and compute a more accurate \( (1 + \varepsilon)l \)-approximation \( \eta \) that applies to all node pairs \( (u, v) \) for which there exists a shortest path of at most \( h \) hops in \( G \).

In particular, if every pair of nodes in \( G \) is connected by a shortest path of at most \( h \) hops, then \( \eta \) is a \( (1 + \varepsilon)l \)-approximation of APSP.

\paragraph{Proof strategy.} Before presenting the full proof of \Cref{hop-to-diameter}, we briefly discuss the main ideas.
Consider the following $4$-hop path.
\tikzset{
  every point/.style = {circle, inner sep={3\pgflinewidth}, opacity=1, fill, solid},
  point/.style={insert path={node[every point, #1]{}}}, point/.default={},
  point name/.style = {insert path={coordinate (#1)}},
}
\begin{center}
\begin{tikzpicture}
	\draw (0, 0) [point]
		to node [auto] {$3$} ++(2, 0) [point]
		to node [auto] {$1$} ++(2, 0) [point]
		to node [auto] {$203$} ++(2, 0) [point]
		to node [auto] {$7$} ++(2, 0) [point]
		;
\end{tikzpicture}
\end{center}
Round up every edge weight to the nearest integer multiple of $x=10$.
\begin{center}
\begin{tikzpicture}
	\draw (0, 0) [point]
		to node [auto] {$10$} ++(2, 0) [point]
		to node [auto] {$10$} ++(2, 0) [point]
		to node [auto] {$210$} ++(2, 0) [point]
		to node [auto] {$10$} ++(2, 0) [point]
		;
\end{tikzpicture}
\end{center}

Even though each \emph{individual} edge weight may increase by a large factor---for example, the edge with original weight 1 is incremented by \( +900\% \)---the length of the path increases only modestly, from 214 to 240, corresponding to an overall increase of approximately \( +12\% \).

The total increment is bounded by \( x \cdot h \), where \( h\) is the number of hops. As long as \( x \cdot h\) remains small compared to the original path length \( \ell \), the overall distortion is limited. To ensure that the additive error does not exceed an \( \varepsilon \)-fraction of the original path length, it suffices to set \( x = \varepsilon \ell / h \).

After rounding, all edge weights become integer multiples of \( x \), so we may divide all weights by \( x \). With the choice \( x = \varepsilon \ell / h \), the path length becomes \( O(h / \varepsilon) \), which is within an \( O(1 / \varepsilon) \) factor of the original hop bound \( h \). Therefore, for any given path length \( \ell \), there exists a suitable choice of \( x \) such that the rounding and scaling process produces a short path  while guaranteeing a \( (1 + \varepsilon) \)-approximation to the original length.

In view of the above, we partition the set of all node pairs \( (u, v) \) into \( O(\log n) \) groups based on their distance \( d(u, v) \); specifically, the \( i \)th group contains those pairs for which \( d(u, v) \approx \frac{2^i}{\varepsilon} \). 
For each index \( i \), we construct a graph \( G_i \) by rounding each edge weight up to the nearest multiple of \( 2^i \), and then dividing all weights by \( 2^i \).
To ensure that the weighted diameter of \( G_i \) is small, we add an edge between every pair of nodes with an appropriate weight.
We then \emph{truncate} all edge weights to ensure that the weighted diameter of \( G_i \) remains small.
The initial distance approximation \( \delta \) is used to estimate the appropriate scale \( i \) for each node pair \( (u, v) \).

\begin{proof}[Proof of \Cref{hop-to-diameter}]  We begin with the construction of the graphs $\{ G_i \}$.
	We set $B=\ceil{\frac{2}{\varepsilon }}$.

 \paragraph{Three sequences of graphs: $\{H_i\}$, $\{K_i\}$, and $\{G_i\}$.} 
			Let $i \in O(\log n)$ be a non-negative integer. Define $x=2^i$. Construct the graphs $H_i$, $K_i$, and $G_i$ as follows:
 \begin{description}
    \item[Construction of \( H_i \):]
    Let \( H_i \) be the graph resulting from rounding each edge weight in \( G \) up to the nearest integer multiple of \( x \); that is, for each edge \( \{u, v\} \), set
    \[
    w^{H_i}_{uv} = \left\lceil \frac{w_{uv}^G}{x} \right\rceil \cdot x.
    \]
    Observe that \( w_{uv}^G \leq w^{H_i}_{uv} \leq w_{uv}^G + x-1 \).

    \item[Construction of \( K_i \):]
    Let \( K_i \) be the graph resulting from augmenting \( H_i \) with an edge of weight \( x \cdot B \cdot h^2 \) between every pair \( \{u, v\} \) of nodes. If an edge $\{u,v\}$ exists, retain only the lighter of the two possible edge weights:
    \[
    w^{K_i}_{uv} = \min\left(w^{H_i}_{uv},\, x \cdot B \cdot h^2\right).
    \]

    \item[Construction of \( G_i \):]
    Let \( G_i \) be the graph obtained from \( K_i \) by dividing all edge weights by \( x \). Observe that all weights in \( G_i \) are integers bounded by \( B \cdot h^2 \), and the weighted diameter of $G_i$ is at most \( B \cdot h^2 \).
\end{description}

For the special case of \( i = 0 \), we have \( x = 1 \), so \( H_0 = G \), as the rounding has no effect. Similarly, we have \( K_0 = G_0 \). However, we emphasize that \( H_0 = G \) is \emph{not} the same as \( K_0 = G_0 \), due to the addition of an edge between every pair of nodes in the construction of \( K_0 \) from \( H_0 \). The main purpose of this step is to reduce the weighted diameter.

\paragraph{Computing $\eta(u,v)$.}
Given an \( l \)-approximation of APSP \( \delta_{G_i} \) for each  \( G_i \), and an \( h \)-approximation of APSP \( \delta \) for the original graph \( G \), each pair of nodes \( \{u, v\} \) computes the value \( \eta(u, v) \) as follows:

	\begin{itemize}
    \item If \( \delta(u, v) \geq \frac{B}{2} \cdot h^2 \), then choose \( i\) to be the unique integer such that \[ 2^{i-1} \cdot B \cdot h^2 \leq \delta(u, v) < 2^i \cdot B \cdot h^2, \] and compute \( \eta(u, v) = 2^i \cdot \delta_{G_i}(u, v) \). Since \( d(u, v) \in n^{O(1)} \), we have \( i \in O(\log n) \).
    
    \item If \( \delta(u, v) < \frac{B}{2} \cdot h^2 \), then choose \( i = 0 \) and compute \( \eta(u, v) =  2^i \cdot \delta_{G_i}(u, v) = \delta_{G_0}(u, v) \).
\end{itemize}

All computations are performed locally by the endpoints of each edge and therefore require zero communication rounds.

	\paragraph{Correctness.} To prove the correctness of the algorithm, consider any pair of nodes $\{u, v\}$ and recall that $x = 2^i$. We first show that $ \eta(u,v) \geq d_G(u,v)$.
\begin{align*}
    \eta(u,v) &= 2^i \cdot \delta_{G_i}(u, v) & \text{Definition of $\eta$}\\
    &\geq 2^i \cdot d_{G_i}(u, v) & \delta_{G_i}(u, v) \geq d_{G_i}(u, v)\\
    &\geq d_{K_i}(u, v) & \text{Construction of $G_i$}\\
    &= \min\left(d_{H_i}(u, v), 2^i \cdot B \cdot h^2\right) &\text{Construction of $K_i$}\\
 &\geq \min\left(d_G(u,v), 2^i \cdot B \cdot h^2\right) &\text{Construction of $H_i$}\\
    &= d_G(u,v). & d_G(u,v) \leq \delta(u,v) < 2^i \cdot B \cdot h^2
\end{align*}

Now, suppose there exists a shortest path between $u$ and $v$ with at most $h$ hops. Recall that in the construction of the graph $H_i$, each edge weight is additively increased by at most $x-1$, so \[d_{H_i}(u, v) \leq  d_G(u, v) + (x-1) \cdot h.\]
Our goal is to show that $d_{H_i}(u, v) \leq  (1+\varepsilon ) \cdot d_G(u, v)$. To do so, it suffices to upper bound the additive error $(x-1) \cdot h$ by $\varepsilon  \cdot d_G(u, v)$.  If \( \delta(u, v) < \frac{B}{2} \cdot h^2 \), then $x = 2^i = 1$, as $i = 0$, so the additive error $(x-1) \cdot h$ is zero, as required. 

For the rest of the proof, suppose  \( \delta(u, v) \geq \frac{B}{2} \cdot h^2 \). Observe that
    \[\frac{x}{2} \cdot B \cdot h^2 = 2^{i-1} \cdot B \cdot h^2 \leq \delta(u,v) \leq d_G(u,v) \cdot h,\]
    which implies \[d_G(u,v) \geq \frac{x}{2} \cdot B\cdot h = \frac{x}{2} \cdot \lceil2/\varepsilon \rceil \cdot h  \geq xh/\varepsilon.\]
    Therefore, we can upper bound the additive error $(x-1) \cdot h$ by
    \[ (x-1) \cdot h < xh \leq \varepsilon  \cdot d_G(u, v),\]
    as required.
\end{proof}

\paragraph{Remark.}
We briefly explain the intuition behind the \( h^2 \) factor in the weighted diameter bound \( \left\lceil \frac{2}{\varepsilon} \right\rceil \cdot h^2 \) for the graph \( G_i \), as follows. The initial \( h \)-approximation \( \delta \) is used to \emph{select} which \( \delta_{G_i} \) values should be considered. We are only interested in node pairs \( \{u, v\} \) for which there exists a shortest path of at most \( h \) hops. For such a path, each edge contributes a rounding error of at most \( 2^i \), so the total additive error is bounded by \( 2^i \cdot h \). We want this to be at most an \( \varepsilon \)-fraction of the actual distance \( d_G(u, v) \). However, since the approximation \( \delta \) may overestimate distances by a factor of up to \( h \), we must ensure that \( \delta(u, v) \geq \frac{2^i \cdot h^2}{\varepsilon} \) in order for the relative error to remain small.
In summary, one factor of \( h \) arises from the number of hops along the path, while the other factor of \( h \) stems from the approximation guarantee of \( \delta \).

\subsection{APSP Approximation With Large Bandwidth} 
\label{proof-special-case-2}

We now combine \Cref{weaker-approximate-APSP-first} and \Cref{hop-to-diameter} to show the following result.

\specialcaseii*

At a high level, the algorithm proceeds as follows:

\begin{enumerate}
    \item We begin by applying \Cref{constant-round-APSP-log-n} to obtain an $a \in O(\log n)$-approximation of APSP. Based on this, we compute a $\sqrt{n}$-nearest hopset $H$ with hop bound $\beta \in O(a \log n) \subseteq O(\log^2 n)$ using \Cref{approximate-to-hop}.
    
    \item Our next goal is to estimate distances to the $\sqrt{n}$-nearest nodes. The hopset $H$ ensures that each of these nodes is reachable via paths of at most $\beta \in O(\log^2 n)$ hops.
    \begin{enumerate}
        \item We construct $O(\log n)$ graphs $G_i$ by applying \Cref{hop-to-diameter} to $G \cup H$ with $h = \beta$. Each $G_i$ is guaranteed to have weighted diameter $O(h^2) \subseteq O(\log^4 n)$.
        
        \item We then apply \Cref{weaker-approximate-APSP-first} to each $G_i$, obtaining an $O(1)$-approximation of all-pairs distances in $G_i$ within $O(\log \log \log n)$ rounds. By \Cref{hop-to-diameter}, this gives an $O(1)$-approximation of the distances in the original graph $G$ for node pairs connected by $h$-hop paths in $G \cup H$. In particular, each node $u$ computes a set $\tilde{N}_k(u)$ of approximately its $k = \sqrt{n}$-nearest nodes, satisfying the conditions required in \Cref{skeleton}.
    \end{enumerate}

    \item Finally, we apply \Cref{skeleton} to construct a skeleton graph $G_S$ on $\tilde{O}(\sqrt{n})$ nodes, such that an $O(1)$-approximation of distances in $G_S$ implies an $O(1)$-approximation in $G$. Since $G_S$ is small, we can compute this approximation in $O(1)$ rounds. Optimizing parameters yields a final distance approximation of $7^3 + \varepsilon$ in $G$.
\end{enumerate}

The bandwidth \( B = \log^4 n \) in \Cref{thm:special-case-2-approximate-APSP} makes sense because, to implement this approach, we need to run the algorithm from \Cref{weaker-approximate-APSP-first}---which operates in the \( \clique[\log^3 n] \) model---on \( O(\log n) \) different graphs \( G_i \).

\begin{proof}[Proof of \Cref{thm:special-case-2-approximate-APSP}] 
In the proof, we only compute a $7^3(1+\varepsilon)^2$-approximation. By a change of variable $\varepsilon' = 7^3(2\varepsilon + \varepsilon^2)$, the approximation factor becomes $7^3 + \varepsilon'$, as required.
We begin with the algorithm description.

\paragraph{Step 1: Initialization.} We begin by applying \Cref{constant-round-APSP-log-n} to compute an $a$-approximation of APSP, where $a \in O(\log n)$.

Next, we use \Cref{approximate-to-hop} with approximation factor $a \in O(\log n)$ and weighted diameter bound $d \in n^{O(1)}$ to construct a $\sqrt{n}$-nearest $\beta$-hopset $H$, where $\beta \in O(a \log d) \subseteq O(\log^2 n)$.

By the definition of hopsets, any node can reach its $\sqrt{n}$-nearest nodes using at most $\beta$ hops in $G \cup H$, and for any pair of nodes $\{u, v\}$, the exact distances are preserved: $d_G(u, v) = d_{G \cup H}(u, v)$. Hence, we may work with the augmented graph $G \cup H$ for the remainder of the algorithm.

\paragraph{Step 2(a): Weight scaling lemma.} Our next goal is to estimate distances to the $\sqrt{n}$-nearest nodes. To achieve this, it suffices to approximate $d_{G \cup H}(u, v)$ for all node pairs $\{u, v\}$ that are reachable within $\beta$ hops in $G \cup H$. This is done using \Cref{hop-to-diameter} with the following parameters:
\begin{itemize}
    \item Set $h = \beta \in O(\log^2 n)$.
    \item Set $l = 7$, as we later apply \Cref{weaker-approximate-APSP-2} to compute a 7-approximation of APSP for each $G_i$.
    \item The input graph is $G \cup H$.
    \item The distance estimate $\delta$ is the $O(\log n)$-approximation of $G \cup H$ obtained from \Cref{constant-round-APSP-log-n}.
    \item Let $\varepsilon > 0$ be an arbitrarily small constant.
\end{itemize}

As a result, we obtain a sequence of $O(\log n)$ graphs $\{ G_0, G_1, \ldots, G_{O(\log n)} \}$. By the guarantees of \Cref{hop-to-diameter}, the weighted diameter of each graph $G_i$ is $O(h^2) \subseteq O(\log^4 n)$.

\paragraph{Step 2(b): Distance estimation.} We now apply the algorithm from \Cref{weaker-approximate-APSP-2} to each of the graphs $G_i$, computing a 7-approximation $\delta_{G_i}$ of distances in each graph.

This is doable because the increased bandwidth in the $\clique[\log^4 n]$ allows us to run $O(\log n)$ instances of the algorithm from \Cref{weaker-approximate-APSP-2}, which operates in the $\clique[\log^3 n]$ model, in parallel. This step takes $O(\log \log \log n)$ rounds.

By the conclusion of \Cref{hop-to-diameter}, combining the 7-approximation $\delta_{G_i}$ of each $G_i$ yields a global distance estimate $\eta$ on $G \cup H$. 
For every pair of nodes $\{u, v\}$, the estimate $\eta$ satisfies:
\begin{equation}
d_G(u, v) = d_{G \cup H}(u, v) \leq \eta(u, v). \label{myeq1}
\end{equation}
Moreover, if there exists a shortest path between $u$ and $v$ in $G \cup H$ with at most $\beta$ hops, then the estimate $\eta$ additionally satisfies:
\begin{equation}
\eta(u, v) \leq 7(1+\varepsilon) \cdot d_{G \cup H}(u, v) = 7(1+\varepsilon) \cdot d_G(u, v). \label{myeq}
\end{equation}

In particular, by the construction of $H$, for every node $u$ and every node $v \in N_{\sqrt{n}}(u)$, there exists a path of at most $\beta$ hops in $G \cup H$. Therefore, \Cref{myeq} applies to all such pairs $u$ and $v$.

\paragraph{Step 3: Skeleton graph construction.} Finally, we apply \Cref{skeleton} with the following parameters:
\begin{itemize}
    \item Set $\delta$ to be the $7(1+\varepsilon)$-approximation $\eta$ computed in the previous step, so $a = 7(1+\varepsilon)$.
    \item Set $k = \sqrt{n}$.
    \item For each node $u$, define $\tilde{N}_k(u)$ as the set of $k$ nodes $v$ with the smallest values of $\eta(u, v)$, breaking ties by node $\ID$s.
    \item Set $l = 1$, since we will compute the \emph{exact} APSP of $G_S$ via brute force.
\end{itemize}

The resulting skeleton graph $G_S$ has $O(|V_S|^2) \subseteq O\left(\frac{n^2 \log^2 k}{k^2}\right) = O(n \log^2 n)$ edges. As such, all edges of $G_S$ can be broadcast in $O(1)$ rounds, after which each node can compute the exact APSP of $G_S$ locally. Therefore, by \Cref{skeleton}, we obtain a $7 l a^2 = 7^3 \cdot (1+\varepsilon)^2$-approximation of APSP in $G$, as desired.

\paragraph{Correctness.} For the remainder of the proof, we omit the subscript and simply write \( d(u,v) \) instead of \( d_G(u,v) = d_{G \cup H}(u,v) \), since distances in \( G \) and \( G \cup H \) are identical by construction.

It remains to verify that the distance estimate \( \delta = \eta \), along with the chosen sets \( \tilde{N}_k(u) \), satisfies the following two conditions required by \Cref{skeleton}:

\begin{description}
    \item[(C1)] For each node \( v \in \tilde{N}_k(u) \), the value \( \delta(u, v) \) is known to \( u \), and satisfies
    \[
    d(u,v) \leq \delta(u, v) \leq a \cdot d(u, v).
    \]
    \item[(C2)] For any \( v \in \tilde{N}_k(u) \) and \( t \notin \tilde{N}_k(u) \), we have
    \[
    \delta(u, v) \leq a \cdot d(u, t).
    \]
\end{description}
Recall the sets \( \tilde{N}_k(u) \) are computed with respect to the approximate distances \( \delta = \eta \), and thus may differ from the true set \( N_k(u) \) of the exact \( k \)-nearest nodes. Nevertheless, we will show that the conditions above still hold.

To prove that (C1) and (C2) hold, we first define a node \( x \) as follows. If \( N_k(u) \setminus \tilde{N}_k(u) \neq \emptyset \), then let \( x \) be a node in this set with the smallest value of \( d(u, x) \). Otherwise, since both \( N_k(u) \) and \( \tilde{N}_k(u) \) contain exactly \( k \) nodes, we must have \( N_k(u) = \tilde{N}_k(u) \). In this case, let \( x \) be a node in \( N_k(u) \) with the largest value of \( \delta(u, x) \).

In either case, we have the key property that \( \delta(u, v) \leq \delta(u, x) \) for every \( v \in \tilde{N}_k(u) \). Moreover, since \( x \in N_k(u) \), we have \( \delta(u, x) \leq a \cdot d(u, x) \) by \Cref{myeq}.

\paragraph{Proof of (C1).} We divide the analysis into two cases.
\begin{itemize}
    \item If \( v \in N_k(u) \), then \(d(u, v) \leq \delta(u, v) \leq a \cdot d(u, v) \) by \Cref{myeq1,myeq}.
    \item Otherwise, $d(u, v) \leq \delta (u, v) \leq  \delta (u, x) \leq  a \cdot d(u, x) \leq  a \cdot d(u, v)$, where the first inequality follows from \Cref{myeq1}, and the last inequality holds because $x \in N_k(u)$ and $v \notin N_k(u)$, so \( d(u, x) \leq d(u, v) \).
\end{itemize}

\paragraph{Proof of (C2).}  Let \( v \in \tilde{N}_k(u) \) and \( t \notin \tilde{N}_k(u) \). Then,
\[
\delta(u, v) \leq \delta(u, x) \leq a \cdot d(u, x) \leq a \cdot d(u, t).
\]
Similarly, the final inequality holds because \( x \in N_k(u) \) and \( t \notin N_k(u) \), so \( d(u, x) \leq d(u, t) \).
 \end{proof}

 \paragraph{Remark.}
 In the proof above, the only step that requires $O(\log \log \log n)$ rounds is \Cref{weaker-approximate-APSP-2}; all remaining steps can be performed in $O(1)$ rounds. There are two places where extra bandwidth is necessary: first, in the parallel execution of \Cref{weaker-approximate-APSP-2} across $O(\log n)$ graphs; and second, in the final step where all edges of the skeleton graph are broadcast.

\subsection{APSP Approximation With Small Bandwidth} 
\label{subsec:general_case}

Here, we make the algorithm
described in \Cref{proof-special-case-2} work in the standard \clique model
while only adding a constant factor to the approximation.

\main*

At a high level, we begin by computing the distances to the $\log^4 n$-nearest nodes for each node. This can be done in $O(1)$ rounds using \Cref{lemma:fast-matrix-exp-i-iter}. We then construct a skeleton graph $G_S$ with $O(n / \log^3 n)$ nodes and simulate the algorithm of \Cref{thm:special-case-2-approximate-APSP} on $G_S$.
Since $G_S$ has $O(n / \log^3 n)$ nodes, we can simulate routing $O(\log^4{n})$ bits between each of its nodes in $O(1)$ rounds using \Cref{lemma:Lenzen-routing}.

\begin{proofof}{\Cref{approximate-APSP}}
We begin by computing, for each node \( v \), the distances between $v$ and the set \( N_k(v) \) of its \( k = \log^4 n \)-nearest nodes, in \( O(1) \) rounds. This is achieved by invoking \Cref{lemma:fast-matrix-exp-i-iter} with the following parameters:
\begin{itemize}
    \item Set \( k = \log^4 n \).
    \item Choose \( h \in \Theta\left(\frac{\log n}{\log \log n}\right) \) so that \( k \in O(n^{1/h}) \).
    \item Set \( i \in O(1) \) such that \( k \le h^i \).
\end{itemize}
Since each node can reach its \( k \)-nearest nodes within at most \( k \) hops, the output \( N_k^{h^i}(v) \) produced by the algorithm matches exactly the set \( N_k(v) \) of $k$-nearest nodes.

Next, we apply \Cref{lemma_skeleton_simplified} with parameters \( k = \log^4 n \) and \( l = 7^3 \cdot (1+\varepsilon)^2 \). This produces a skeleton graph \( G_S \) consisting of \( O\left(\frac{n \log k}{k}\right) \subseteq O\left(\frac{n}{\log^3 n}\right) \) nodes.

We can simulate an algorithm on \( G_S \) in the \( \clique[\log^4 n] \) model using \Cref{lemma:Lenzen-routing}, incurring only an \( O(1) \)-factor overhead. Specifically, each node in \( G_S \) needs to send and receive \( O(\log^4 n) \) bits to and from each of the \( O\left(\frac{n}{\log^3 n}\right) \) other nodes. This results in a total communication volume of \( O(n) \) messages of \( O(\log n) \) bits per node, which can be routed in \( O(1) \) rounds via \Cref{lemma:Lenzen-routing}.

We apply the algorithm of \Cref{thm:special-case-2-approximate-APSP} to compute an \( l = (7^3 + \varepsilon) \)-approximate APSP on \( G_S \). By the guarantee of \Cref{lemma_skeleton_simplified}, this implies that each node can compute a \( 7l = 7(7^3 + \varepsilon) \)-approximate APSP on the original graph \( G \). By setting \( \varepsilon' = 7\varepsilon \), the overall approximation factor becomes \( 7^4 + \varepsilon' \), as claimed.

All steps aside from the invocation of \Cref{thm:special-case-2-approximate-APSP} complete in \( O(1) \) rounds. The algorithm of \Cref{thm:special-case-2-approximate-APSP} runs in \( O(\log \log \log n) \) rounds, and therefore the overall round complexity is \( O(\log \log \log n) \), as required.
\end{proofof}

\subsection{Limiting the Number of Rounds}
\label{subsec:truncated-algorithm}

We now demonstrate a tradeoff between the round complexity and the approximation factor by restricting the algorithm to run in \( O(t) \) rounds.

\truncatedalgorithm*

Recall that \Cref{approximate-APSP} builds on \Cref{thm:special-case-2-approximate-APSP}, which itself is based on \Cref{weaker-approximate-APSP-first}. We therefore begin by considering round-limited versions of these results.

\begin{lemma}[{Limiting the number of rounds of \Cref{weaker-approximate-APSP-first}}]
	\label{weaker-approximate-APSP-truncated}
Given that $G$ is a weighted undirected graph with  weighted diameter $d  \in  (\log n)^{O(1)}$, for any $t\geq 1$, there is an $O\left(\log^{2^{-t}} n\right)$-approximation algorithm for APSP that takes $O(t)$ rounds.
\end{lemma}
\begin{proof}
We repeat the proof of \Cref{weaker-approximate-APSP-first}, with the following modifications:
\begin{itemize}
	\item If $t$ is large enough such that the desired approximation factor $O\left(\log^{2^{-t}} n\right)$ is $  O(\log \log n)$, then $t  \in  \Omega (\log \log \log n)$, so we can apply \Cref{weaker-approximate-APSP-first} without modifications, as it already finishes in $O(\log \log \log n) \subseteq  O(t)$ rounds.
	\item Otherwise, we only use the first part of the algorithm of \Cref{weaker-approximate-APSP-first} and limit the number of applications of \Cref{approximate-reduction} to $t$, so we obtain an $O(\log^{2^{-t}} n)$-approximation of APSP in $O(t)$ rounds.\qedhere
\end{itemize}
\end{proof}

\begin{lemma}[{Limiting the number of rounds of \Cref{thm:special-case-2-approximate-APSP}}]
	\label{thm:special-case-2-approximate-APSP-truncated}
For any $t \geq 1$, an $O\left(\log^{2^{-t}} n\right)$-approximation of APSP in weighted undirected graphs can be computed w.h.p.\ in $O(t)$ rounds in the $\clique[\log^4 n]$ model.
\end{lemma}
\begin{proof}
We proceed in the same manner as the proof of \Cref{thm:special-case-2-approximate-APSP}, with a modification. Instead of applying \Cref{weaker-approximate-APSP-first} to obtain a 7-approximation \( \delta_{G_i} \) on each graph \( G_i \) in \( O(\log \log \log n) \) rounds, we apply \Cref{weaker-approximate-APSP-truncated} with parameter \( t' = t + 1 \) to obtain a \( b \in O\left(\log^{2^{-t'}} n\right) = O\left(\sqrt{\log^{2^{-t}} n}\right) \)-approximation \( \delta_{G_i} \) in \( O(t') = O(t) \) rounds.

As a result, in the skeleton graph construction, the parameter \( a \) becomes \( a = b(1+\varepsilon) \in O\left(\sqrt{\log^{2^{-t}} n}\right) \), while the other parameters remain unchanged. The final approximation ratio is then \( 7la^2 \in O(b^2) \subseteq O\left(\log^{2^{-t}} n\right) \), as desired.
\end{proof}

\begin{proof}[Proof of \Cref{tradeoff-APSP}]
The proof of \Cref{tradeoff-APSP} closely follows that of \Cref{approximate-APSP}, with the only difference being that we apply \Cref{thm:special-case-2-approximate-APSP-truncated} with parameter $t$ in place of \Cref{thm:special-case-2-approximate-APSP}. As a result, the overall round complexity improves to $O(t)$, while the final approximation ratio becomes $7 \cdot O\left(\log^{2^{-t}} n\right) = O\left(\log^{2^{-t}} n\right)$.
\end{proof}

\section{Discussion}

In this work, we show that an \( O(1) \)-approximation for APSP can be computed in \( O(\log \log \log n) \) rounds in the \clique model. Many intriguing questions remain open.

A natural next goal is to achieve an \( O(1) \)-approximation in \( O(1) \) rounds. One promising direction is to design faster algorithms for computing distances to the \( k \)-nearest nodes. Our current approach computes these distances in \( O(1) \) rounds \emph{only} for sub-polynomial values of \( k \). If one can compute these distances, or even an \( O(1) \)-approximation of them, in \( O(1) \) rounds for a small polynomial \( k \), this would lead to an \( O(1) \)-approximation for APSP in \( O(1) \) rounds.

Another important direction is to improve the approximation factor. Although our algorithm achieves a constant-factor approximation, the constant \( 7^4 + \varepsilon = 2401+ \varepsilon\) is very large and unlikely to be significantly reduced through simple optimizations. An intriguing open question is whether one can design a generic  procedure to improve the approximation factor: Given any \( t \)-round \( O(1) \)-approximation algorithm as a black box, automatically improve it to an \( O(t) \)-round \( c \)-approximation algorithm, for some small constant \( c \), such as 2 or 3.

It is also interesting to ask whether similar results can be obtained in the closely related linear-memory Massively Parallel Computation (MPC) model. A direct simulation of the \clique model in near-linear-memory MPC requires \( \tilde{\Omega}(n^2) \) total memory, which exceeds the typical memory bound of \( \tilde{O}(m) \) in the MPC model. A promising direction, therefore, is to develop low-memory variants of our algorithm that are suitable for the MPC model.

Finally, the new techniques introduced in this work, such as our new skeleton graphs, may be of independent interest. It would be worthwhile to explore their applications in other models of computation, such as the \congest model of distributed computing.

\bibliographystyle{alpha}
\bibliography{References}

\appendix

\section{Proof of \texorpdfstring{\Cref{thm:zero_weight_component_compression}}{Theorem 2.1}} \label{sec:app_prelim}

In the appendix, we prove \Cref{thm:zero_weight_component_compression}.

\thmZero*

\begin{proof}
The algorithm is as follows:
\begin{enumerate}
\item Identify the connected components of the subgraph formed by the edges of weight $0$. That is, nodes $u$ and $v$ belong to the same component if and only if $d_G(u, v) = 0$.
\item For each connected component, select a representative node to serve as its leader.
\item For every pair of leaders $s$ and $t$, have both $s$ and $t$ learn the minimum weight of an edge connecting their respective components, if such an edge exists.
\end{enumerate}
All of the above can be completed in $O(1)$ rounds, as follows:
\begin{description}
    \item[Step 1:] Computing the connected components formed by the zero-weight edges can be done by first computing a minimum spanning tree using the algorithm of Nowicki~\cite{10.1145/3406325.3451136}, which runs in $O(1)$ rounds.  By the end of the algorithm, every node learns the entire minimum spanning tree. With this global view, each node can locally filter out all nonzero-weight edges and retain only the zero-weight edges in the spanning tree, allowing them to identify the connected components induced by the zero-weight edges.

    \item[Step 2:] For each connected component, designate the node with the smallest $\ID$ as its leader.
Since all nodes know the full structure of the connected components, this step can also be performed locally without communication.

    \item[Step 3:] This step is nontrivial and is discussed below.
\end{description}

Let $S$ be the set of leaders.  
For each leader $s \in S$, let $C(s) = \{v \in V \mid d(s, v) = 0\}$ denote the set of nodes in the same connected component as $s$, including $s$ itself. 
By the discussion above, every node knows the value of $C(s)$ for all $s \in S$.

	We execute the following algorithm:
	\begin{enumerate}
		\item For each node $v$ and each leader $t$, node $v$ sends a message $(s, w)$ to node $t$, where $s$ is the leader of the connected component of $v$, and $w$ is the minimum weight of an edge connecting $v$ and any node in $C(t)$.
		\item Each leader $t$ receives a set of messages $(s, w)$. For each value of $s$, the minimum value of $w$ among all these messages is the minimum-weight edge connecting any node in $C(t)$ to any node in $C(s)$.
	\end{enumerate}
    As each node $v$ just sends one message to each leader $t$, the above algorithm can be completed in one round.

Afterwards, we consider the compressed graph whose node set is the set $S$ of leaders, and the weight of the edge between two leaders $s$ and $t$ is the minimum weight of an edge connecting  $C(t)$ and $C(s)$.

We run the algorithm $\mathcal{A}$ on this graph to obtain a distance approximation  
\[
\delta \colon S \times S \to \mathbb{Z}
\]  
in $f(n)$ rounds. We then compute the final distance estimate  
\[
\eta \colon V \times V \to \mathbb{Z}
\]  
as follows: For any two leaders $s, t \in S$, and any nodes $v \in C(s)$ and $u \in C(t)$, we set $\eta(v, u) = \delta(s, t)$.  
Clearly, $\eta$ is an $a$-approximation of APSP on $G$.

Every node already knows all sets $C(s)$, so to complete the algorithm, each node $v \in C(s)$ only needs to learn the values $\delta(s, t)$ for all $t \in S$.  
This is done by having each leader $t$ send the value $\delta(s, t)$ to every node $v \in C(s)$.  
Since each leader sends only one message to each node, this final step takes $O(1)$ rounds.
\end{proof}

\end{document}